\providecommand{\tabularnewline}{\\}
\theoremstyle{plain}
\newtheorem{thm}{\protect\theoremname}
\theoremstyle{definition}
\newtheorem{example}[thm]{\protect\examplename}
\theoremstyle{plain}
\newtheorem{prop}[thm]{\protect\propositionname}
\theoremstyle{plain}
\newtheorem{lem}[thm]{\protect\lemmaname}
\newenvironment{proof}[1][\protect\proofname]{\par
\normalfont\topsep6\p@\@plus6\p@\relax
\trivlist
\itemindent\parindent
\item[\hskip\labelsep
\scshape
#1]\ignorespaces
}{%
\endtrivlist\@endpefalse
}
\providecommand{\examplename}{Example}
\providecommand{\lemmaname}{Lemma}
\providecommand{\proofname}{Proof}
\providecommand{\propositionname}{Proposition}
\providecommand{\theoremname}{Theorem}
\begin{document}
\global\long\def\lambdalin{\lambda_{lin}}
\global\long\def\lambdalinr{\lambda_{lin}^{\step}}
\global\long\def\lambdaline{\lambda_{lin}^{=}}

\global\long\def\lambdaalg{\lambda_{alg}}
\global\long\def\lambdaalgr{\lambda_{alg}^{\step}}
\global\long\def\lambdaalge{\lambda_{alg}^{=}}

\global\long\def\step{\to}
\global\long\def\steps{\step^{*}}

\global\long\def\stepbv{\step_{\beta_{v}}}
\global\long\def\stepsbv{\step_{\beta_{v}}^{*}}
\global\long\def\stepbn{\step_{\beta_{n}}}
\global\long\def\stepsbn{\step_{\beta_{n}}^{*}}

\global\long\def\stepl{\step_{l}}
\global\long\def\steple{\step_{l}^{=}}
\global\long\def\stepsl{\step_{l}^{*}}
\global\long\def\stepsle{\step_{l}^{=*}}

\global\long\def\stepa{\step_{a}}
\global\long\def\stepae{\step_{a}^{=}}
\global\long\def\stepsa{\step_{a}^{*}}
\global\long\def\stepsae{\step_{a}^{=*}}

\global\long\def\steplb{\step_{l\cup\beta}}
\global\long\def\steplbe{\step_{l\cup\beta}^{=}}
\global\long\def\stepslb{\step_{l\cup\beta}^{*}}
\global\long\def\stepslbe{\step_{l\cup\beta}^{=*}}

\global\long\def\stepab{\step_{a\cup\beta}}
\global\long\def\stepabe{\step_{a\cup\beta}^{=}}
\global\long\def\stepsab{\step_{a\cup\beta}^{*}}
\global\long\def\stepsabe{\step_{a\cup\beta}^{=*}}

\global\long\def\translatel#1{[\![#1]\!]}

\global\long\def\translatea#1{\{\!|#1|\!\}}

\global\long\def\lam#1{\lambda#1.\,}

\global\long\def\cpy{\mathsf{copy}}

\global\long\def\id{\mathsf{id}}

\title{Completeness of algebraic CPS simulations}

\author{Ali Assaf \institute{LIG, Universit\'e Joseh Fourier\\
 Grenoble, France}
\institute{\'Ecole Polytechnique\\
 Palaiseau, France}
\email{Ali.Assaf@imag.fr} \and Simon Perdrix \institute{CNRS, LIG, Universit\'e de Grenoble\\
 Grenoble, France}
\email{Simon.Perdrix@imag.fr}}
\maketitle
\begin{abstract}
The \emph{algebraic lambda calculus} ($\lambdaalg$) and the \emph{linear
algebraic lambda calculus} ($\lambdalin$) are two extensions of the
classical lambda calculus with linear combinations of terms. They
arise independently in distinct contexts: the former is a fragment
of the differential lambda calculus, the latter is a candidate lambda
calculus for quantum computation. They differ in the handling of application
arguments and algebraic rules. The two languages can simulate each
other using an algebraic extension of the well-known call-by-value
and call-by-name CPS translations. These simulations are sound, in
that they preserve reductions. In this paper, we prove that the simulations
are actually complete, strengthening the connection between the two
languages.
\end{abstract}

\section{Introduction}

\paragraph{Algebraic lambda calculi}

The \emph{algebraic lambda calculus} ($\lambdaalg$) \cite{VauxMSCS09}
and the \emph{linear algebraic lambda calculus} ($\lambdalin$) \cite{ArrighiDowekRTA08}
are two languages that extend the classical lambda calculus with linear
combinations of terms such as $\alpha.M+\beta.N$. They have been
introduced independently in two different contexts. The former is
a fragment of the differential lambda calculus, and has been introduced
in the context of linear logic with the purpose of quantifying non-determinism:
each term of a linear combination represents a possible evolution
in a non deterministic setting. The latter has been introduced as
a candidate for a language of quantum computation, where a linear
combination of terms corresponds to a superposition of states such
as $\frac{1}{\sqrt{2}}.|0\rangle+\frac{1}{\sqrt{2}}.|1\rangle$. The strength of  $\lambdalin$ is to allow superpositions of any terms without distinguishing programs and data, whereas most of the candidate languages for quantum computation are based on the slogan \emph{quantum data, classical control} \cite{Selinger04towardsa,SelingerV06,Perdrix08}.

The two languages, $\lambdaalg$ and $\lambdalin$, differ in their operational semantics. It turns
out that the first follows a \emph{call-by-name} strategy while the
second follows the equivalent of a \emph{call-by-value} strategy.
For example, in $\lambdaalg$ the term $(\lam xfxx)(\alpha.y+\beta.z)$
reduces as follows:

\begin{eqnarray*}
(\lam xfxx)(\alpha.y+\beta.z) & \step & f(\alpha.y+\beta.z)(\alpha.y+\beta.z)
\end{eqnarray*}
However, this does not agree with the nature of quantum computing.
It leads to the cloning of the state $\alpha.y+\beta.z$, which contradicts
the \emph{no-cloning} theorem \cite{WoottersZurekNATURE82}. Only
copying of base terms such as $y$ is allowed. Therefore, $\lambdalin$
reduces the term as follows.

\begin{eqnarray*}
(\lam xfxx)(\alpha.y+\beta.z) & \step & (\lam xfxx)(\alpha.y)+(\lam xfxx)(\beta.z)\\
 & \step & \alpha.(\lam xfxx)y+\beta.(\lam xfxx)z\\
 & \step & \alpha.fyy+\beta.fzz
\end{eqnarray*}

Despite these differences, the work in \cite{DiazcaroPerdrixTassonValiron11}
showed that the two languages can simulate each other. This was accomplished
by defining a translation from one language to the other. Given a
term $M$ of $\lambdalin$, we can encode it into a term $N$ of $\lambdaalg$
such that reductions of $M$ in $\lambdalin$ correspond to reductions
of $N$ in $\lambdaalg$. The translation is an algebraic extension
of the classical \emph{continuation-passing style} (CPS) encoding
used for simulating call-by-name and call-by-value \cite{HatcliffDanvySPPL94,PlotkinTCS75,SabryWadlerTPLS96}.

\paragraph{Contribution}

The CPS transformations introduced in \cite{DiazcaroPerdrixTassonValiron11}
have been proven to be sound, \textit{i.e.} if a term $M$ reduces
to a value $V$ in the source language, then the translation of $M$
reduces to the translation of $V$ in the target language. In this
paper we prove that they are actually complete, \textit{i.e.} that
the converse is also true: if the translation of $M$ reduces to the
translation of $V$ in the target language, then $M$ reduces to $V$
in the source language. We do so by modifying techniques used by Sabry
and Wadler in \cite{SabryWadlerTPLS96} to define an inverse translation
and showing that it also preserves reductions. The completeness of
these CPS transformations strengthens the connection between works
done in linear logic \cite{EhrhardMSCS05,EhrhardLICS10,EhrhardRegnierTCS03,VauxRTA07}
and works on quantum computation \cite{AltenkirchGrattageLICS05,ArrighiDiazcaroQPL09,ArrighiDowekWRLA04,ValironDCM10}.

\paragraph{Plan of the paper}

The rest of the paper is structured as follows. In section \ref{sec:language},
the syntax and the reduction rules of both algebraic languages are
presented. Section \ref{sec:cbv-to-cbn} is dedicated to the simulation
of $\lambdalin$ by $\lambdaalg$, and section \ref{sec:cbn-to-cbv}
to the opposite simulation. In each of the two cases, the translation
introduced in \cite{DiazcaroPerdrixTassonValiron11} is presented,
the grammar of the encoded terms in the target language is given,
the inverse translation is defined, and finally the completeness of
the CPS translation is proven.

\section{\label{sec:language}The algebraic lambda calculi}

The languages $\lambdalin$ and $\lambdaalg$ share the same syntax,
defined by the following grammar, where $\alpha$ ranges over a defined
ring, the \emph{ring of scalars}.

\[
\begin{array}{rcll}
M,N,L & ::= & V\mid MN\mid\alpha.M\mid M+N & \mbox{(terms)}\\
U,V,W & ::= & B\mid0\mid\alpha.V\mid V+W & \mbox{(values)}\\
B & ::= & x\mid\lam xM & \mbox{(base values)}
\end{array}
\]

We can form sums of terms and multiplication by scalars, and there
is a neutral element $0$. The values we consider are formed by taking
linear combinations of base values, \textit{i.e.} variables and abstractions.
This gives the languages the structure of a vector space (a \emph{module}
to be precise).

We describe the operational semantics of the two languages using small-step
rewrite rules. The rules are presented in Figure \ref{fig:rewrite-rules}.
As mentioned, $\lambdaalg$ substitutes the argument directly in the
body of a function, while $\lambdalin$ delays the substitution until
the argument is a base value. We use the same notation as in \cite{DiazcaroPerdrixTassonValiron11}
to define the following rewrite systems obtained by combining the
rules described in Figure \ref{fig:rewrite-rules} .

\[
\begin{array}{cc}
\begin{array}{lcl}
\step_{\beta_{n}} & ::= & \beta_{n}\cup\xi\\
\stepa & ::= & A\cup L\cup\xi
\end{array} & \begin{array}{lcl}
\step_{\beta_{v}} & ::= & \beta_{v}\cup\xi\cup\xi_{\lambda_{lin}}\\
\stepl & ::= & A_{l}\cup A_{r}\cup L\cup\xi\cup\xi_{\lambda_{lin}}
\end{array}\end{array}
\]
The rewrite systems for the two languages are then defined as follows.

\begin{center}
\begin{tabular}{|c|c|}
\hline 
Language & Rewrite system\tabularnewline
\hline 
\hline 
$\lambdalin$ & $\steplb::=(\stepl)\cup(\step_{\beta_{v}})$\tabularnewline
\hline 
$\lambdaalg$ & $\stepab::=(\stepa)\cup(\step_{\beta_{n}})$\tabularnewline
\hline 
\end{tabular}
\par\end{center}

\begin{figure}
\[
\begin{array}{cc}
\hline \multicolumn{2}{c}{\mbox{Rules specific to \ensuremath{\lambdaalg}}}\\
\hline \\
\mbox{Call-by-name (\ensuremath{\beta_{n}})} & \mbox{Linearity of application (\ensuremath{A})}\\[2ex]
\begin{array}{rcl}
(\lam xM)N & \to & M[x:=N]\end{array} & \begin{array}{rcl}
(M+N)L & \to & ML+NL\\
(\alpha.M)N & \to & \alpha.(MN)\\
(0)M & \to & 0
\end{array}\\
\\
\hline \multicolumn{2}{c}{\mbox{Rules specific to \ensuremath{\lambdalin}}}\\
\hline \\
\mbox{Call-by-value (\ensuremath{\beta_{v}})} & \mbox{Right context rule (\ensuremath{\xi_{\lambdalin}})}\\[2ex]
\begin{array}{rcl}
(\lam xM)B & \to & M[x:=B]\end{array} & \dfrac{M\to M'}{VM\to VM'}\\[2ex]
\\
\mbox{Left linearity of application (\ensuremath{A_{l}})} & \mbox{Right linearity of application (\ensuremath{A_{r}})}\\[2ex]
\begin{array}{rcl}
(M+N)V & \to & MV+NV\\
(\alpha.M)V & \to & \alpha.(MV)\\
(0)V & \to & 0
\end{array} & \begin{array}{rcl}
B(M+N) & \to & BM+BN\\
B(\alpha.M) & \to & \alpha.(BM)\\
B(0) & \to & 0
\end{array}\\
\\
\hline \multicolumn{2}{c}{\mbox{Common rules}}\\
\hline \\
\multicolumn{2}{c}{\mbox{Vector space rules (\ensuremath{L={\it Asso}\cup{\it Com}\cup F\cup S})}}\\[2ex]
\mbox{Associativity (\ensuremath{{\it Asso}})} & \mbox{Commutativity (\ensuremath{{\it Com}})}\\[2ex]
\begin{array}{rcl}
M+(N+L) & \to & (M+N)+L\\
(M+N)+L & \to & M+(N+L)
\end{array} & \begin{array}{rcl}
M+N & \to & N+M\end{array}\\[2ex]
\\
\mbox{Factorization (\ensuremath{F})} & \mbox{Simplification (\ensuremath{S})}\\[2ex]
\begin{array}{rcl}
\alpha.M+\beta.M & \to & (\alpha+\beta).M\\
\alpha.M+M & \to & (\alpha+1).M\\
M+M & \to & (1+1).M\\
\alpha.(\beta.M) & \to & (\alpha\beta).M
\end{array} & \begin{array}{rcl}
\alpha.(M+N) & \to & \alpha.M+\alpha.N\\
1.M & \to & M\\
0.M & \to & 0\\
\alpha.0 & \to & 0\\
0+M & \to & M
\end{array}\\[2ex]
\\
\multicolumn{2}{c}{\mbox{Context rules (\ensuremath{\xi})}}\\[2ex]
\dfrac{M\to M'}{(M)~N\to(M')~N} & \dfrac{M\to M'}{M+N\to M'+N}\\[2ex]
\dfrac{M\to M'}{\alpha.M\to\alpha.M'} & \dfrac{N\to N'}{M+N\to M+N'}\\
\\
\hline \end{array}
\]

\caption{\label{fig:rewrite-rules}Rewrite rules for $\lambdalin$ and $\lambdaalg$}
\end{figure}

\begin{example}
\label{ex:cbn-vs-cbv}Let $\langle M,N\rangle:=\lam ffMN$ be the
Church encoding of pairs, let $\cpy=\lam x\langle x,x\rangle$, and
consider the term $\cpy(y+z)$. The term reduces in $\lambdaalg$:
\begin{eqnarray*}
\cpy(y+z) & = & (\lam x\langle x,x\rangle)(y+z)\\
 & \stepbn & \langle y+z,y+z\rangle
\end{eqnarray*}
 As mentioned above, the term $y+z$ is cloned, and if it represented
quantum superposition this would violate the no-cloning theorem. In
$\lambdalin$, the term reduces instead as:
\end{example}
\begin{eqnarray*}
\cpy(y+z) & = & (\lam x\langle x,x\rangle)(y+z)\\
 & \stepl & (\lam x\langle x,x\rangle)y+(\lam x\langle x,x\rangle)z\\
 & \stepbv & \langle y,y\rangle+(\lam x\langle x,x\rangle)z\\
 & \stepbv & \langle y,y\rangle+\langle z,z\rangle
\end{eqnarray*}

\section{\label{sec:cbv-to-cbn}Completeness of the call-by-value to call-by-name
simulation}

The translation in \cite{DiazcaroPerdrixTassonValiron11} is a direct
extension of the classical CPS encoding used by Plotkin \cite{PlotkinTCS75}
to show that the call-by-name lambda calculus simulates call-by-value.
The definition is the following.

\begin{eqnarray*}
\translatel x & = & \lambda k.kx\\
\translatel{\lambda x.M} & = & \lambda k.k(\lambda x.\translatel M)\\
\translatel{MN} & = & \lambda k.\translatel M(\lambda b_{1}.\translatel N(\lambda b_{2}.b_{1}b_{2}k))\\
\translatel 0 & = & 0\\
\translatel{\alpha.M} & = & \lambda k.(\alpha.\translatel M)k\\
\translatel{M+N} & = & \lambda k.(\translatel M+\translatel N)k
\end{eqnarray*}

This translation simulates the reductions of a term $M$ by the reductions
of the term $\translatel Mk$, where $k$ is free. It works the same
way as the classical CPS simulation: instead of returning the result
of a computation, all terms receive an additional argument $k$ called
the \emph{continuation}, which describes the rest of the computation.
This technique makes evaluation order, intermediate values, and function
returns explicit, which allows us to encode the proper evaluation
strategy.

The translation preserves the set of free variables. New variables
names like $k$, $b$, $b_{1}$ or $b_{2}$ are chosen to be fresh
so as to not collide with free variables in the term. We reserve and
always use the name $k$ to abstract over continuations, and the names
$b$, $b_{1}$, and $b_{2}$ for intermediate values. It is a fact
that these variables never clash with each other.
\begin{example}
The reductions of the term $\cpy(y+z)$ in $\lambdalin$ are simulated
in $\lambdaalg$ by the following reductions:

\begin{eqnarray*}
\translatel{\cpy(y+z)}k & = & \left(\lam k\translatel{\cpy}\left(\lam{b_{1}}\translatel{y+z}\left(\lam{b_{2}}b_{1}b_{2}k\right)\right)\right)k\\
 & \step_{\beta_{n}} & \translatel{\cpy}\left(\lam{b_{1}}\translatel{y+z}\left(\lam{b_{2}}b_{1}b_{2}k\right)\right)\\
 & \stepbn & \left(\lam{b_{1}}\translatel{y+z}\left(\lam{b_{2}}b_{1}b_{2}k\right)\right)\left(\lam x\translatel{\langle x,x\rangle}\right)\\
 & \stepbn & \translatel{y+z}\left(\lam{b_{2}}\left(\lam x\translatel{\langle x,x\rangle}\right)b_{2}k\right)\\
 & \stepbn & \left(\translatel y+\translatel z\right)\left(\lam{b_{2}}\left(\lam x\translatel{\langle x,x\rangle}\right)b_{2}k\right)\\
 & \stepa & \translatel y\left(\lam{b_{2}}\left(\lam x\translatel{\langle x,x\rangle}\right)b_{2}k\right)+\translatel z\left(\lam{b_{2}}\left(\lam x\translatel{\langle x,x\rangle}\right)b_{2}k\right)\\
 & \stepsab & \left(\lam{b_{2}}\left(\lam x\translatel{\langle x,x\rangle}\right)b_{2}k\right)y+\left(\lam{b_{2}}\left(\lam x\translatel{\langle x,x\rangle}\right)b_{2}k\right)z\\
 & \stepsab & \left(\lam x\translatel{\langle x,x\rangle}\right)yk+\left(\lam x\translatel{\langle x,x\rangle}\right)zk\\
 & \stepsab & \translatel{\langle y,y\rangle}k+\translatel{\langle z,z\rangle}k
\end{eqnarray*}

We see that the result is the one that corresponds to call-by-value.
As expected, there was no cloning.
\end{example}
Notice in the example above that there are many more steps in the
simulation than in the original reduction sequence in Example \ref{ex:cbn-vs-cbv}.
A lot of the steps replace the continuation variables and intermediate
variables introduced by the translation. In a sense, all these intermediary
terms represent the {}``same'' term in the source language, and
we call these intermediary steps \emph{administrative reductions}.

To deal with this, we use an intermediate translation denoted by $M:K$.
This \emph{colon }translation was originally used by Plotkin \cite{PlotkinTCS75}
to describe intermediate reductions of translated terms, where initial
\emph{administrative redexes }had been eliminated.

\[
\begin{array}{cc}
\begin{array}{rcl}
\Psi(x) & = & x\\
\Psi(\lam xM) & = & \lam x\translatel M\\
B:K & = & K\Psi(B)\\
0:K & = & 0\\
\alpha.M:K & = & \alpha.(M:K)\\
M+N:K & = & M:K+N:K
\end{array} & \begin{array}{ccc}
BN:K & = & N:\lam b\Psi(B)bK\\
(MN)L:K & = & MN:\lam{b_{1}}\translatel L(\lam{b_{2}}b_{1}b_{2}K)\\
(0)N:K & = & 0:K\\
(\alpha.M)N:K & = & \alpha.(MN):K\\
(M+N)L:K & = & ML+NL:K
\end{array}\end{array}
\]

This CPS translation was proved to be sound by showing that it preserves
reductions: for any term $M$, if $M$ reduces to $M'$, then $M:K$
reduces to $M':K$ for all $K$. Combined with the fact that $\translatel Mk$
reduces initially to $M:k$, this gave the soundness of the simulation.
\begin{prop}[Soundness \cite{DiazcaroPerdrixTassonValiron11}]
\label{prop:Soundness} For any term $M$, if $M\stepslb V$ then
$\translatel Mk\stepsab V:k$.
\end{prop}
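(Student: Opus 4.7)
The plan is to follow Plotkin's strategy from the classical call-by-value to call-by-name CPS simulation, factoring the simulation through the colon translation $M : K$. The key observation is that $\translatel{M} k$ and $M : k$ differ only by administrative redexes, and the colon translation is engineered so that a single $\steplb$-step of $M$ corresponds to a well-behaved $\stepsab$-sequence on $M : K$. I will need three ingredients: (i) an administrative reduction lemma, (ii) a substitution lemma, and (iii) a one-step simulation lemma for the colon translation. The proposition then follows by iterating (iii) along the given reduction sequence $M \stepslb V$ and prepending (i) with $K = k$.

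The \emph{administrative reduction} lemma states that for every term $M$ and every continuation $K$, $\translatel{M} K \stepsab M : K$; this is proved by induction on $M$. For instance, in the application case $\translatel{PQ} K$ reduces by one $\beta_n$ step to $\translatel{P}(\lam{b_1} \translatel{Q}(\lam{b_2} b_1 b_2 K))$, after which the IH is applied while case-splitting on the syntactic shape of $P$ to land on the appropriate clause of the colon translation. The \emph{substitution lemma} asserts the syntactic equality $\translatel{M[x := B]} = \translatel{M}[x := \Psi(B)]$, again by induction on $M$; this works precisely because $\Psi(B)$ is the translation of a base value and therefore slots into the positions in $\translatel{M}$ that originally held $x$.

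The central step is the \emph{one-step simulation} lemma: if $M \steplb M'$ then $M : K \stepsab M' : K$ for every continuation $K$. It is proved by case analysis on the rule used. The vector-space rules $L$ and the linearity rules $A_l, A_r$ go through cleanly because the colon translation distributes over sums and scalar multiplications, so that both sides have the same shape and an instance of the corresponding $L$ or $A$ rule of $\lambdaalg$ produces the required step. The context rules $\xi$ and $\xi_{\lambdalin}$ follow from the IH, possibly with a different continuation built from the context; for example, for $BM \to BM'$ one has $BM : K = M : \lam b \Psi(B) b K$, and the IH at continuation $\lam b \Psi(B) b K$ gives exactly what is needed.

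The main obstacle is the $\beta_v$ case $(\lam x M) B \step M[x := B]$. Unfolding the colon translation, $(\lam x M) B : K$ equals $B : \lam b (\lam x \translatel{M}) b K$, which in turn equals $(\lam b (\lam x \translatel{M}) b K)\,\Psi(B)$ since $B$ is a base value; two successive $\beta_n$-reductions together with the substitution lemma then bring this to $\translatel{M[x := B]} K$, and a final application of the administrative reduction lemma lands on $M[x := B] : K$. The delicate point is checking that this entire chain is a valid $\stepsab$-sequence of $\lambdaalg$, which it is because $\stepbn$ is a subsystem of $\stepab$ and each intermediate equality is genuine syntactic identity. Concatenating the one-step simulation along $M \stepslb V$ and prepending the administrative reduction yields $\translatel{M} k \stepsab M : k \stepsab V : k$, which is the claim.
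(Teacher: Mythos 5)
Your proposal is correct and takes essentially the same route as the paper's (cited) argument: eliminate the initial administrative redexes to get $\translatel Mk\stepsab M:k$, show that the colon translation preserves each $\steplb$-step ($M:K\stepsab M':K$, with the $\beta_v$ case handled via the substitution lemma $\translatel{M[x:=B]}=\translatel M[x:=\Psi(B)]$), and concatenate along the reduction sequence. The only sketch-level caveat is that the application and left-context $\xi$ cases need a slightly stronger induction (an auxiliary fact of the form $P:\lam{b_1}\translatel N(\lam{b_2}b_1b_2K)\stepsab PN:K$), but this is exactly the standard Plotkin-style bookkeeping and does not affect the soundness of the approach.
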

The goal of this paper is to show that the converse is also true:
\begin{thm}[Completeness]
\label{thm:completeness} If $\translatel Mk\stepsab V:k$ then $M\stepslb V$.
\end{thm}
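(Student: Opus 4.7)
The plan is to follow the Sabry--Wadler strategy explicitly suggested by the authors: define an inverse translation from a suitable fragment of $\lambdaalg$ back to $\lambdalin$, and show that $\stepab$ steps in the image are simulated by $\stepslbe$ steps in $\lambdalin$. Combined with Proposition~\ref{prop:Soundness}, this will yield completeness.

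First I would characterise the set of target terms that can arise during a reduction $\translatel{M}k \stepsab V:k$ by a dedicated grammar $G$, built from the shapes that already occur in the colon translation: sub-terms of the form $\translatel{M}$, $M:K$, value encodings $\Psi(B)$, and the canonical continuations $\lam b \Psi(B)bK$ and $\lam{b_{1}}\translatel{N}(\lam{b_{2}}b_{1}b_{2}K)$. By induction on $M$ and $V$, one checks that $\translatel{M}k \in G$ and $V:k \in G$, and by case analysis on the rules of $\stepab$ one checks that $G$ is closed under $\stepab$. This requires that substitution of $G$-subterms into $G$-contexts stays in $G$, which holds because the CPS discipline only ever substitutes values or continuations into the appropriate positions.

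Next I would define an inverse $(-)^{\circ}:G\to\lambdalin$ by recursion on the grammar, arranged so that $(\translatel{M}k)^{\circ}=M$ and $(V:k)^{\circ}=V$. The key simulation lemma is then: if $P,Q\in G$ and $P\stepab Q$, then $P^{\circ}\stepslbe Q^{\circ}$. The $\stepab$ reductions split into two classes: \emph{administrative} ones (such as $\beta_{n}$ firing on a freshly introduced continuation abstraction, or $\xi$ steps purely inside a continuation) that the inverse collapses to identities, and \emph{essential} ones that are mirrored source-side, namely $\beta_{n}$ at a user-level application maps to $\beta_{v}$, the linearity rules $A$ map to $A_{l}$ or $A_{r}$, and the vector-space rules $L$ translate verbatim since both calculi share them. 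Iterating along the reduction sequence $\translatel{M}k\stepsab V:k$ yields $M\stepslbe V$; since $M=V$ is a fortiori a $\stepslb$ reduction, we obtain $M\stepslb V$.

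The main obstacle will be the simulation lemma. The delicate point is that every essential $\stepab$ step in the target must lift to a \emph{legal} $\steplb$ step in the source: when a $\beta_{n}$-redex at a user-level application fires in the target, the position occupied by the argument must in fact hold a value (so that $\beta_{v}$ applies source-side), and when an $A$-redex fires, either the right-hand argument must be a value (so $A_{l}$ applies) or the head must be a base value (so $A_{r}$ applies). Both obligations follow from the shape of $G$---the CPS translation guarantees that arguments are reduced into the appropriate form before any user-level application ever becomes active---but discharging them requires a careful case-by-case analysis across every $\stepab$ rule, including substitutions performed by $\beta_{n}$ on terms whose subterms live in different positions of the grammar.
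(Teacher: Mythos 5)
Your plan is essentially the paper's own proof: the paper likewise fixes a grammar of CPS terms (computations, suspensions, continuations, CPS-values) closed under $\stepab$, defines an inverse translation satisfying $\overline{\translatel Mk}=M$ and $\overline{V:k}=V$, and proves a simulation lemma stating that each $\stepab$ step on computations maps to zero or more $\steplb$ steps (administrative steps collapsing to equalities, $\beta_{n}$ on user-level redexes becoming legal $\beta_{v}$ steps because CPS arguments are base values, and the algebraic rules handled via linearity of the inverse in the continuation context). The proposal is correct and follows the same route, differing only in presentation.
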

To prove it, we define an inverse translation and show that it preserves
reductions. First, we need to characterize the\emph{ }structure of
the encoded terms. We define a subset of $\lambdaalg$ which contains
the image of the translation and is closed by $\stepab$ reductions
with the following grammar:

\[
\begin{array}{rcll}
C & ::= & KB\mid B_{1}B_{2}K\mid TK & \mbox{(base computations)}\\
D & ::= & C\mid0\mid\alpha.D\mid D_{1}+D_{2} & \mbox{(computation combinations)}\\
\\
S & ::= & \lam kC & \mbox{(base suspensions)}\\
T & ::= & S\mid0\mid\alpha.T\mid T_{1}+T_{2} & \mbox{(suspension combinations)}\\
\\
K & ::= & k\mid\lam bBbK\mid\lam{b_{1}}T(\lam{b_{2}}b_{1}b_{2}K) & \mbox{(continuations)}\\
\\
B & ::= & x\mid\lam xS & \mbox{(CPS-values)}
\end{array}
\]

There are four main categories of terms: \emph{computations}, \emph{suspensions},
\emph{continuations}, and \emph{CPS-values}. We distinguish base computations
$C$ from linear combinations of computations $D$, as well as base
suspensions $S$ from linear combinations of suspensions $T$. The
translation $\translatel M$ gives a term of the class $T$, while
$\translatel Mk$ and $M:K$ are of class $D$. One can easily check
that each of the classes $D$, $T$, $K$ and $B$ is closed by $\stepab$
reductions.

There are some restrictions on the names of the variables in this
grammar. The variable name $k$ that appears in the class $K$ must
be the same as the one used in suspensions of the form $\lam kC$.
It cannot appear as a variable name in any other term. This is to
agree with the requirement of freshness that we mentioned above. The
same applies for the variables $b$, $b_{1}$ and $b_{2}$: they cannot
appear (free) in any sub-term. In particular, these restrictions ensure
that the grammar for each category is unambiguous. The three kinds
of variables ($x$, $k$ and $b$) play different roles, which is
why we distinguish them using different names.

Computations are the terms that simulate the steps of the reductions,
hence the name. They are the only terms that contain applications,
so they are the only terms that can $\beta$-reduce. In fact, notice
that the arguments in applications are always base values. This shows
a simple alternative proof for the \emph{indifference }property \cite{DiazcaroPerdrixTassonValiron11}
of the CPS translation, namely that the reductions of a translated
term are exactly the same in $\lambdalin$ and $\lambdaalg$.
\begin{prop}[Indifference \cite{DiazcaroPerdrixTassonValiron11}]
 For any computations $D$ and $D'$, $D\stepab D'$ if and only
if $D\steplb D'$. In particular, if $M\stepslb V$ then $\translatel Mk\stepslb V:k$.
\end{prop}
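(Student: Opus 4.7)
The plan is to exploit a single structural invariant of the computation grammar: the right-hand argument of every application appearing inside a computation is a base value (either a $B$ or a $K$). I would verify this by direct inspection of the three productions $C ::= KB \mid B_1B_2K \mid TK$ together with the grammars for $K$, $B$, $S$ and $T$: the only place where a non-base value sits on the \emph{left} of an application is in $TK$, and even there the right side is the base value $K$. I would take as given the closure of the classes $D$, $T$, $K$, $B$ under $\stepab$, as claimed in the excerpt.

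Once this invariant is in hand, I would prove $D \stepab D'$ iff $D \steplb D'$ by a rule-by-rule comparison. For the $\beta$-rules, since the argument of any redex $(\lam xM)N$ inside a computation is a base value, $\beta_n$ and $\beta_v$ fire on exactly the same redexes with exactly the same contractum. For linearity, the $\lambdaalg$ rule $A$ always meets a base value on the right inside a computation, so it coincides with the $\lambdalin$ rule $A_l$; and the $\lambdalin$-only rule $A_r$, which demands a linear combination on the right of an application, is inapplicable because the right-hand argument of every application in a computation is a base value. For the context rules, only $\xi_{\lambdalin}$ is specific to $\lambdalin$, permitting a reduction on the right of $VM$; but the right is always a base value, which has no top-level redex, and no rule descends under a $\lambda$, so $\xi_{\lambdalin}$ never fires on a computation. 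The vector-space rules $L$ and the common $\xi$ rules are identical in both systems. Iterating and using closure of the computation class under reduction would then lift the iff from single steps to $\stepsab$ versus $\stepslb$.

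For the ``in particular'' clause, I would combine soundness (Proposition~\ref{prop:Soundness}) with the iff just established: soundness gives $\translatel Mk \stepsab V:k$ from $M \stepslb V$, and since $\translatel Mk$ is a computation and the class is closed under $\stepab$, the entire reduction sequence stays inside the computation class, so the $\stepab \Rightarrow \steplb$ direction transports the sequence into $\translatel Mk \stepslb V:k$. The main obstacle, and the only genuine verification required, is the exhaustive-but-routine case analysis showing that $A_r$ and $\xi_{\lambdalin}$ really never fire on any computation; everything else is structural bookkeeping.
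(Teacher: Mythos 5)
Your proposal is correct and follows essentially the same route as the paper: the paper's own justification is precisely the observation that within the computation grammar every application has a base-value argument (and only the common rules can fire elsewhere), so the $\stepab$ and $\steplb$ single-step relations coincide on computations, and the ``in particular'' clause then follows from soundness together with closure of the class $D$ under reduction. Your rule-by-rule check of $\beta_n$ versus $\beta_v$, $A$ versus $A_l$, and the inapplicability of $A_r$ and $\xi_{\lambda_{lin}}$ is exactly the case analysis the paper leaves implicit.
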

We define the inverse translation using the following four functions,
corresponding to each of the four main categories in the grammar.

\[
\begin{array}{rclrcl}
\overline{KB} & = & \underline{K}[\psi(B)] & \sigma(\lam kC) & = & \overline{C}\\
\overline{B_{1}B_{2}K} & = & \underline{K}[\psi(B_{1})\psi(B_{2})] & \sigma(0) & = & 0\\
\overline{TK} & = & \underline{K}[\sigma(T)] & \sigma(\alpha.T) & = & \alpha.\sigma(T)\\
\overline{0} & = & 0 & \sigma(T_{1}+T_{2}) & = & \sigma(T_{1})+\sigma(T_{2})\\
\overline{\alpha.D} & = & \alpha.\overline{D}\\
\overline{D_{1}+D_{2}} & = & \overline{D_{1}}+\overline{D_{2}}\\
 &  &  & \underline{k}[M] & = & M\\
\psi(x) & = & x & \underline{\lam bBbK}[M] & = & \underline{K}[\psi(B)M]\\
\psi(\lam xS) & = & \lam x\sigma(S) & \underline{\lam{b_{1}}T(\lam{b_{2}}b_{1}b_{2}K)}[M] & = & \underline{K}[M\sigma(T)]
\end{array}
\]

These functions are well-defined because the grammar for each category
is unambiguous. To prove the completeness of the simulation we need
a couple of lemmas. The first two state that the translation defined
above is in fact an inverse.
\begin{lem}
\label{lem:inverse-term} For any term $M$, $\overline{\translatel Mk}=M$.\end{lem}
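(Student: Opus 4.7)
The plan is to proceed by structural induction on $M$, after one preliminary reduction. For every term $M$, the translated expression $\translatel M k$ is a base computation of the form $TK$: indeed $\translatel M$ always lies in the class $T$ (this holds for all six clauses, including the degenerate $\translatel 0 = 0$, since $0\in T$), and the continuation is simply $k$. Consequently, by the definition of the inverse,
\[
\overline{\translatel M \, k} \;=\; \underline{k}\bigl[\sigma(\translatel M)\bigr] \;=\; \sigma(\translatel M),
\]
so the lemma reduces to establishing the auxiliary identity $\sigma(\translatel M) = M$.

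The induction then unfolds one clause of $\translatel{\cdot}$ at a time. The cases $M = 0$, $M = \alpha.M'$, and $M = M_1 + M_2$ collapse immediately via the corresponding clauses $\sigma(0)=0$, $\sigma(\alpha.T)=\alpha.\sigma(T)$, $\sigma(T_1+T_2)=\sigma(T_1)+\sigma(T_2)$ together with the induction hypothesis. For $M = x$, one reads off $\sigma(\lam k kx) = \overline{kx} = \underline{k}[\psi(x)] = x$; and for $M = \lam x M'$, $\sigma(\translatel M) = \overline{k(\lam x \translatel{M'})} = \psi(\lam x \translatel{M'}) = \lam x \sigma(\translatel{M'})$, which equals $\lam x M'$ by the induction hypothesis. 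The only case with genuine bookkeeping is the application $M = M_1 M_2$, where $\translatel M$ produces the nested continuation $\lam{b_{1}} \translatel{M_2}(\lam{b_{2}} b_{1} b_{2} k)$, matching the third clause in the grammar of $K$. I would apply the inverse clause $\underline{\lam{b_{1}}T(\lam{b_{2}}b_{1}b_{2}K)}[X] = \underline{K}[X\sigma(T)]$ with $T = \translatel{M_2}$ and inner $K = k$ to peel off the two continuation layers and obtain $\sigma(\translatel{M_1})\,\sigma(\translatel{M_2})$, which is $M_1 M_2$ by two uses of the induction hypothesis.

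I do not anticipate any serious obstacle: the functions $\sigma$, $\psi$, and $\underline{\cdot}[\cdot]$ were defined precisely to undo each clause of $\translatel{\cdot}$, so the argument is essentially symbolic bookkeeping. The only delicate point is to verify that every sub-term produced by $\translatel{\cdot}$ lands in the correct grammatical category so that the pattern matching in the definition of the inverse is unambiguous; this is inherited from the closure properties of the classes $T$, $K$, and $B$ already asserted in the excerpt, together with the stated freshness conventions on the reserved variables $k$, $b$, $b_1$, $b_2$ which ensure that the three productions for $K$ are mutually exclusive.
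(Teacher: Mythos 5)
Your proposal is correct and follows essentially the same route as the paper: reduce the claim to $\sigma(\translatel M)=M$ via $\overline{\translatel Mk}=\underline{k}[\sigma(\translatel M)]$, then argue by structural induction on $M$, with the application case handled exactly as in the paper by unwinding the continuation $\lam{b_{1}}\translatel N(\lam{b_{2}}b_{1}b_{2}k)$. The algebraic cases are stated a bit tersely (they also pass through the wrapper $\lam k(\cdot)k$ via $\sigma(\lam kC)=\overline C$ and $\overline{TK}=\underline K[\sigma(T)]$ before the clauses you cite apply), but this is the same trivial bookkeeping the paper carries out.
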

\begin{proof}
We have $\overline{\translatel Mk}=\underline{k}[\sigma(\translatel M)]=\sigma(\translatel M)$
so we have to show that $\sigma(\translatel M)=M$ for all $M$. The
proof follows by induction on the structure of $M$.
\end{proof}
In general, $\overline{M:k}\neq M$. Although it would be true for
a classical translation, it does not hold in the algebraic case. Specifically,
we have $(\alpha.M)L:k=\alpha.(ML):k$ and $(M+N)L:k=ML+NL:K$, so
the translation is not injective. However it is still true for values.
\begin{lem}
\label{lem:inverse-value} For any value $V$, $\overline{V:k}=V$.\end{lem}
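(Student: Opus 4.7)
The plan is to proceed by structural induction on the value $V$, using Lemma \ref{lem:inverse-term} to handle the base value case. The grammar for values is $V ::= B \mid 0 \mid \alpha.V \mid V + W$, so there are four cases to consider, and unfolding both the colon translation $V:k$ and the inverse translation $\overline{\cdot}$ reduces each one to an easy equality.

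First I would establish an auxiliary fact: for any base value $B$, we have $\psi(\Psi(B)) = B$. For $B = x$ this is immediate since $\Psi(x) = x$ and $\psi(x) = x$. For $B = \lam x M$, we have $\Psi(\lam x M) = \lam x \translatel{M}$ and hence $\psi(\Psi(\lam x M)) = \lam x \sigma(\translatel{M})$. But the proof of Lemma \ref{lem:inverse-term} already shows that $\sigma(\translatel{M}) = M$, so this equals $\lam x M$ as required.

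With that in hand, the induction is routine. For $V = B$, we have $B:k = k\Psi(B)$, so $\overline{B:k} = \underline{k}[\psi(\Psi(B))] = \psi(\Psi(B)) = B$ by the auxiliary fact. For $V = 0$, $\overline{0:k} = \overline{0} = 0$. For $V = \alpha.V'$, using the colon translation clause $\alpha.V' : k = \alpha.(V':k)$ and the inverse translation clause $\overline{\alpha.D} = \alpha.\overline{D}$, we get $\overline{\alpha.V':k} = \alpha.\overline{V':k} = \alpha.V'$ by the induction hypothesis. The sum case is analogous.

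I do not expect any real obstacle here; the statement is essentially the observation that the non-injectivity of the colon translation mentioned just before the lemma only manifests itself on terms of the form $(\alpha.M)L$ or $(M+N)L$, that is, in applications, and values contain no applications. The induction cleanly walks past $\alpha.{-}$ and ${-}+{-}$ because the colon translation distributes over these constructors for values, and the base case is discharged entirely by Lemma \ref{lem:inverse-term}.
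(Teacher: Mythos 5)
Your proof is correct and takes essentially the same route as the paper: structural induction on $V$ with the same four cases, the $0$, $\alpha.V$ and sum cases handled identically. The only cosmetic difference is in the base case, where you discharge $\overline{k\Psi(B)}$ via the auxiliary identity $\psi(\Psi(B))=B$, while the paper observes $\lam k\,k\Psi(B)=\translatel B$ and invokes Lemma \ref{lem:inverse-term}; both reduce to the same underlying fact $\sigma(\translatel M)=M$.
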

\begin{proof}
By induction on the structure of $V$.
\end{proof}
The third lemma that we need states that the inverse translation preserves
reductions.
\begin{lem}
\label{lem:inverse-step} For any computation $D$, if $D\stepab D'$
then $\overline{D}\stepslb\overline{D'}$.
\end{lem}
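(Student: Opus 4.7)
The plan is to proceed by induction on the structure of $D$ together with case analysis on the rewrite rule producing $D \stepab D'$. Before attacking the induction I would establish two pieces of technical machinery. First, a family of substitution lemmas, most importantly $\overline{C[k:=K]} = \underline{K}[\overline{C}]$ together with the commutation of $\sigma$, $\psi$, and $\overline{\cdot}$ with substitution of CPS-values for source variables; each is proved by structural induction on the class being substituted in, with the freshness conventions on $k$, $b$, $b_1$, $b_2$ keeping the recursion tame. Second, a congruence and ``quasi-linearity'' lemma for continuation contexts, stating that if $M \stepslb M'$ then $\underline{K}[M] \stepslb \underline{K}[M']$, and moreover that $\underline{K}[0]$, $\underline{K}[\alpha.M]$, and $\underline{K}[M+N]$ reduce to $0$, $\alpha.\underline{K}[M]$, and $\underline{K}[M]+\underline{K}[N]$ respectively; the proof proceeds by induction on $K$, using $A_r$ at the clause $\underline{K'}[\psi(B)\,\cdot]$ (where $\psi(B)$ is a base value) and $A_l$ at the clause $\underline{K'}[\cdot\,\sigma(T)]$.

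With these in hand I would sweep through the cases on $D$ and on the reduction rule. For the non-base forms $\alpha.D_1$, $D_1+D_2$, and $0$, internal reductions go through the induction hypothesis composed with the $\xi$-rules of $\lambdalin$, and top-level $L$-rules transfer verbatim since $\overline{\cdot}$ commutes with the vector-space operations on computation combinations. For a base computation $C \in \{KB,\, B_1B_2K,\, TK\}$, reductions strictly inside a sub-component $K$, $B$, $B_1$, $B_2$, or $T$ are handed to the induction hypothesis and the congruence lemma, noting that each of these classes is closed under $\stepab$. The top-level rules then split into three groups: the $A$-rules $(T_1+T_2)K$, $(\alpha.T)K$, $(0)K$ and their analogues are exactly what the quasi-linearity lemma was designed to simulate; the administrative $\beta_n$-steps on the lambdas $\lambda b$ and $\lambda b_1$ inside a continuation, as well as $(\lambda x.\,S) B_2 K$, either yield a zero-step identity or a single $\beta_v$-step after inversion, because $\underline{K}[\cdot]$ and the substitution lemma for $x$ are set up precisely so that these administrative shapes collapse; finally, the non-administrative step $SK \stepbn C[k:=K]$ with $S = \lam k C$ is absorbed by the substitution lemma for $k$, giving a zero-step identity $\underline{K}[\overline{C}] = \overline{C[k:=K]}$.

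The main obstacle I anticipate is the quasi-linearity lemma at the clause $\underline{K'}[\cdot\,\sigma(T)]$: pushing the context through a sum via $A_l$ requires $\sigma(T)$ to be a value of $\lambdalin$, and $\sigma(T)$ need not be one in general, as one can already see with $T = \translatel{xy}$ for which $\sigma(T) = xy$. Resolving this likely requires exploiting a structural invariant on the computations that actually arise, or using the induction hypothesis to first reduce $\sigma(T)$ to a value via its own internal reductions before applying $A_l$. This delicate interaction between the value discipline of $\lambdalin$ and the unrestricted distributivity of $\lambdaalg$ is where the heart of the argument lies; once it is handled, the remaining cases are routine if tedious.
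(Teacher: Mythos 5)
Your route is the paper's own: the same inverse-translation machinery, the same auxiliary lemmas (the substitution lemmas, $\underline{K}[\overline{C}]=\overline{C[k:=K]}$, congruence of $\underline{K}[\cdot]$ under reduction, quasi-linearity of $\underline{K}[\cdot]$, and preservation of suspension steps by $\sigma$), and the same case split on the reduction rule, with the administrative $\beta_{n}$-redexes collapsing to equalities, $(\lam kC)K$ absorbed by the $k$-substitution lemma, $(\lam xS)BK$ giving one genuine $\beta_{v}$-step, the $A$-rules handled by quasi-linearity, and the $L$ and $\xi$ rules handled by linearity of $\overline{\,\cdot\,}$ plus the induction hypothesis. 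Organizing the induction on $D$ as well as on the rule is an immaterial difference.

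However, the obstacle you flag is left undischarged, and it is a genuine one: the proof is incomplete exactly there. In the quasi-linearity lemma, the case $K=\lam{b_{1}}T(\lam{b_{2}}b_{1}b_{2}K')$ needs $(M_{1}+M_{2})\,\sigma(T)\stepl M_{1}\sigma(T)+M_{2}\sigma(T)$, an $A_{l}$-step whose side condition requires $\sigma(T)$ to be a $\lambdalin$-value, and as you note $\sigma(T)$ may be an application. Neither of your proposed repairs works as stated: there is no structural invariant excluding such $T$ (the continuation $\lam{b_{1}}\translatel{zw}(\lam{b_{2}}b_{1}b_{2}k)$ already appears after two $\beta_{n}$-steps from $\translatel{(x+y)(zw)}k$), and pre-reducing $\sigma(T)$ changes the target term $\overline{D'}$ --- moreover in this instance $zw$ is a stuck non-value, so it cannot be reduced to a value at all. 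Concretely, for the $A$-step $(\translatel x+\translatel y)K\stepa\translatel xK+\translatel yK$ with that $K$, the required source reduction is $(x+y)(zw)\stepslb x(zw)+y(zw)$, which the value-restricted $A_{l}$ and $\xi_{\lambda_{lin}}$ do not supply. Closing this requires more than the routine inductions: either a sharper statement (restricting the continuations/arguments that can occur, or working with the equational variants $\lambdaline$, $\lambdaalge$ mentioned in the conclusion), or an argument over the whole reduction sequence ending in $V:k$ rather than step by step. For fairness, note that the paper's own appendix proof of the continuation-linearity lemma performs this $\stepl$-step without checking the value condition, so you have put your finger on a soft spot of the published argument rather than overlooked an idea it contains; still, as a proof of the lemma your proposal stops short at precisely this point.
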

With these we can prove the completeness theorem.
\begin{proof}[Proof of Theorem \ref{thm:completeness}]
 By using Lemma \ref{lem:inverse-step} for each step of the reduction,
we get $\overline{\translatel Mk}\stepslb\overline{V:k}$. By Lemma
\ref{lem:inverse-term} and Lemma \ref{lem:inverse-value}, this implies
$M\stepslb V$.
\end{proof}
To prove Lemma \ref{lem:inverse-step}. we need several intermediary
lemmas.
\begin{lem}[Substitution]
\label{lem:substitution-lemma} The following are true.
\begin{enumerate}
\item \textup{$\psi(B_{1})[x:=\psi(B)]=\psi(B_{1}[x:=B])$}
\item $\sigma(T)[x:=\psi(B)]=\sigma(T[x:=B])$
\item \textup{$\overline{C}[x:=\psi(B)]=\overline{C[x:=B]}$}
\item $\underline{K}[M][x:=\psi(B)]=\underline{K[x:=B]}[M[x:=\psi(B)]]$
\end{enumerate}
\end{lem}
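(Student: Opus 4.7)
The plan is to prove the four statements simultaneously by structural induction, using as induction measure the size of the outer syntactic object in each part: $B_1$ for (1), $T$ for (2), $C$ for (3), and $K$ for (4). Because the four grammars are mutually recursive, each inductive step will typically appeal to the induction hypothesis of a different part. The ambient freshness convention --- that the reserved names $k$, $b$, $b_1$, $b_2$ do not appear in $B$ nor clash with $x$ --- ensures that substitution commutes with the binders introduced in the definitions of $\sigma$ and $\underline{K}[\,\cdot\,]$, so no $\alpha$-renaming is required.

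For (1), the case $B_1 = y$ is settled by inspecting whether $y = x$, and the case $B_1 = \lambda y.S$ unfolds $\psi$ to $\lambda y.\sigma(S)$ and closes by (2) applied to $S$. For (2), the linear cases $T \in \{0,\, \alpha.T',\, T_1 + T_2\}$ are immediate componentwise; the only interesting case is $T = \lambda k.C$, where both sides collapse to $\overline{C}[x:=\psi(B)]$ and $\overline{C[x:=B]}$ respectively (using that $k$ is fresh for $\psi(B)$) and are equal by (3).

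For (3), each of the three shapes $KB_0$, $B_1 B_2 K$, $TK$ is resolved by first unfolding $\overline{\,\cdot\,}$ into the form $\underline{K}[\,\cdot\,]$, then pushing the substitution through $\underline{K}[\,\cdot\,]$ by means of (4), and finally closing with (1) on the embedded base values or with (2) on the embedded suspension. For (4), the base case $K = k$ is immediate since $k$ is reserved and hence $k[x:=B] = k$, so both sides reduce to $M[x:=\psi(B)]$. The two other shapes of $K$ unfold $\underline{K}[M]$ one layer, invoke (4) on the strictly smaller inner continuation $K'$, and finish via (1) on the embedded $B_0$ or (2) on the embedded $T$, respectively.

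No individual case is delicate; the principal obstacle is logistical, namely keeping the four intertwined induction hypotheses aligned with the unfolding steps and verifying that the freshness convention covers every binder crossed by the substitution. In particular, the clean behaviour of substitution on $\underline{k}[M]$ in (4) and on $\sigma(\lambda k.C)$ in (2) both hinge on the standing assumption that $k$ never occurs as a source variable nor free in $B$; without this convention the statement would require explicit $\alpha$-conversions at each step and the four equations would need to be phrased up to renaming of bound variables.
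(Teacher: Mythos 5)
Your proposal is correct and follows essentially the same route as the paper: a mutual structural induction on $B_{1}$, $T$, $C$, and $K$, with the same case decomposition (unfolding $\psi$, $\sigma$, $\overline{\,\cdot\,}$, $\underline{K}[\,\cdot\,]$ one layer and appealing to the appropriate companion statement), and the same implicit reliance on the freshness convention for $k$, $b$, $b_{1}$, $b_{2}$. The paper's appendix proof just spells out each case explicitly; nothing in your argument diverges from it.
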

\begin{proof}
By induction on the structure of $B_{1}$, $T$, $C$, and $K$.
\end{proof}
The next lemma states that we can compose two continuations $K_{1}$
and $K_{2}$ by replacing $k$ by $K_{1}$ in $K_{2}$.
\begin{lem}
\label{lem:.continuation-composition}For all terms $M$ and continuations
$K_{1}$ and $K_{2}$, $\underline{K_{1}}[\underline{K_{2}}[M]]=\underline{K_{2}[k:=K_{1}]}[M]$.\end{lem}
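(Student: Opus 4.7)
The plan is to proceed by structural induction on $K_{2}$, using the three-clause grammar for continuations given earlier. In the base case $K_{2}=k$, the left-hand side is $\underline{K_{1}}[\underline{k}[M]]=\underline{K_{1}}[M]$ by the defining clause $\underline{k}[M]=M$, while $K_{2}[k:=K_{1}]=K_{1}$, so the right-hand side is also $\underline{K_{1}}[M]$. The equality is immediate.

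For the first inductive case, $K_{2}=\lam bBbK'$, unfolding the definition of $\underline{\cdot}$ gives
$\underline{K_{1}}[\underline{K_{2}}[M]]=\underline{K_{1}}[\underline{K'}[\psi(B)M]]$.
Applying the induction hypothesis to the strictly smaller continuation $K'$ rewrites this as $\underline{K'[k:=K_{1}]}[\psi(B)M]$. On the other side, $K_{2}[k:=K_{1}]=\lam bBb(K'[k:=K_{1}])$, whose interpretation is by definition $\underline{K'[k:=K_{1}]}[\psi(B)M]$, matching the left-hand side. The second inductive case $K_{2}=\lam{b_{1}}T(\lam{b_{2}}b_{1}b_{2}K')$ is symmetric: here $\sigma(T)$ is placed on the right of $M$, but the propagation of $[k:=K_{1}]$ to the inner $K'$ and the appeal to the induction hypothesis go through in exactly the same way.

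The only delicate point — more a bookkeeping check than a real obstacle — is that the substitution $[k:=K_{1}]$ must leave the sub-terms $B$ and $T$ and the $\lambda$-bindings of $K_{2}$ alone, so that $K_{2}[k:=K_{1}]$ has the shape claimed above. This is guaranteed by the naming conventions stated right after the grammar: the variable $k$ occurs free only at the distinguished tail position of a continuation and is bound wherever it appears inside a suspension $\lam kC$; the variables $b,b_{1},b_{2}$ do not occur free anywhere, in particular not in $K_{1}$. With capture-avoiding substitution these restrictions yield $B[k:=K_{1}]=B$ and $T[k:=K_{1}]=T$, and so in both inductive cases the substitution commutes past the outer structure of $K_{2}$ and lands cleanly on $K'$. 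The induction then collapses to syntactic identities and the lemma follows.
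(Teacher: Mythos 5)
Your proof is correct and follows essentially the same route as the paper's: structural induction on $K_{2}$, with the base case $k$ immediate and the two inductive cases handled by unfolding $\underline{\cdot}$ and applying the induction hypothesis to the inner continuation. The additional remark that $B[k:=K_{1}]=B$ and $T[k:=K_{1}]=T$ by the freshness/binding conventions is a correct justification of a step the paper leaves implicit.
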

\begin{proof}
By induction on the structure of $K_{2}$.\end{proof}
\begin{lem}
\textup{\label{lem:continuation-substitution}For all $K$ and $C$,
$\underline{K}[\overline{C}]=\overline{C[k:=K]}$.}\end{lem}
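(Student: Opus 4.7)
My plan is to proceed by case analysis on the three possible shapes of the base computation $C$, namely $C = K'B$, $C = B_1 B_2 K'$, or $C = T K'$, where I write $K'$ for the continuation occurring inside $C$ (to distinguish it from the outer $K$ of the statement). There is no recursion over $C$ itself, so the whole argument is really a three-case unfolding followed by a single lemma invocation per case.

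The preliminary observation that makes everything work is the variable convention declared just after the grammar: the name $k$ may occur free only at the leaves of continuations, and is otherwise bound whenever it appears inside a suspension $\lambda k.C$. Consequently, $k$ is not free in any CPS-value $B$, nor in any suspension combination $T$, so that $B[k:=K] = B$ and $T[k:=K] = T$. In every one of the three cases, the substitution $C[k:=K]$ therefore only propagates into the inner continuation $K'$, while the translations $\psi(B)$ and $\sigma(T)$ are unaffected.

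Each case then reduces to a single application of Lemma \ref{lem:.continuation-composition}. For $C = K'B$ I unfold
\[\overline{C[k:=K]} \;=\; \overline{K'[k:=K]\,B} \;=\; \underline{K'[k:=K]}[\psi(B)],\]
while on the other side $\underline{K}[\overline{C}] = \underline{K}[\underline{K'}[\psi(B)]]$, and the composition lemma equates the two. The cases $C = B_1 B_2 K'$ and $C = T K'$ are entirely analogous, with $\psi(B_1)\psi(B_2)$ and $\sigma(T)$ respectively playing the role of $\psi(B)$.

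I do not anticipate a genuine obstacle here: once the observation about the restricted free occurrences of $k$ is in hand, the identity is essentially a rebranding of Lemma \ref{lem:.continuation-composition} over the three syntactic shapes of $C$. The only place where care is needed is the variable bookkeeping, that is, trusting the grammatical convention so that substitution commutes past the translations $\psi$ and $\sigma$ trivially; once that is accepted, the three displayed equations close the proof.
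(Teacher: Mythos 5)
Your proposal is correct and follows essentially the same route as the paper: a case split on the three shapes of $C$, each closed by one application of Lemma \ref{lem:.continuation-composition}, with the variable convention guaranteeing that $[k:=K]$ leaves $B$ and $T$ untouched (a point the paper uses implicitly). Your remark that no induction hypothesis on nested computations is actually needed is accurate, even though the paper phrases the argument as induction on $C$.
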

\begin{proof}
By induction on the structure of $C$, using Lemma \ref{lem:.continuation-composition}
where necessary.
\end{proof}
The following lemma is essential to the preservation of reductions.
It shows that reductions of a term $M$ can always be carried in the
context $\underline{K}[M]$.
\begin{lem}
\label{lem:continuation-step} For any continuation $K$ and term
$M$, if $M\steplb M'$, then $\underline{K}[M]\steplb\underline{K}[M']$.\end{lem}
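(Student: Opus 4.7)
The plan is to prove Lemma \ref{lem:continuation-step} by structural induction on the continuation $K$, following the grammar $K ::= k \mid \lam b B b K' \mid \lam{b_1} T (\lam{b_2} b_1 b_2 K')$. In the base case $K = k$, the defining equation $\underline{k}[M] = M$ reduces the claim directly to the hypothesis $M \steplb M'$, so the conclusion is immediate.

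For the inductive step, each form of $K$ rewrites $\underline{K}[M]$ as $\underline{K'}[P]$ for a strictly smaller continuation $K'$ and a term $P$ built from $M$. Specifically, if $K = \lam b B b K'$ then $P = \psi(B)M$, and if $K = \lam{b_1} T (\lam{b_2} b_1 b_2 K')$ then $P = M\sigma(T)$. In each case I first lift the hypothesis $M \steplb M'$ to the intermediate term $P$ using an appropriate context rule of $\lambdalin$, then apply the induction hypothesis on $K'$ to obtain $\underline{K'}[P] \steplb \underline{K'}[P']$, which is exactly $\underline{K}[M] \steplb \underline{K}[M']$.

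The delicate point, which I expect to be the only real obstacle, is verifying that the chosen context rule actually fires in $\lambdalin$. In the second inductive subcase, $M$ sits on the \emph{left} of an application, so the standard context rule $\xi$ applies unconditionally and yields $M\sigma(T) \steplb M'\sigma(T)$. In the first inductive subcase, however, $M$ sits on the \emph{right} of an application, and $\lambdalin$ only permits right-context reductions through $\xi_{\lambdalin}$, which requires the left operand to be a value. This is exactly where the grammatical restriction on continuations earns its keep: by inspection of the definition, $\psi(B)$ is always a base value (a variable or an abstraction), so $\xi_{\lambdalin}$ applies and delivers $\psi(B) M \steplb \psi(B) M'$. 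Once these single-step lifts are in hand, the induction closes with no further effort, and the freshness conventions on $k$, $b$, $b_1$, $b_2$ rule out any variable-capture complications.
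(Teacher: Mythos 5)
Your proposal is correct and follows essentially the same argument as the paper: structural induction on $K$, with the key observation (stated in the paper as ``$\psi(B)M\steplb\psi(B)M'$ since $\psi(B)$ is a base term'') that the left-application case goes through $\xi$ and the right-application case through $\xi_{\lambda_{lin}}$ because $\psi(B)$ is a base value. Your explicit discussion of which context rule fires in each subcase merely spells out what the paper leaves implicit.
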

\begin{proof}
By induction on the structure of $K$.\end{proof}
\begin{lem}
\label{lem:continuation-linearity}The following are true.
\begin{itemize}
\item $\underline{K}[M_{1}+M_{2}]\stepsl\underline{K}[M_{1}]+\underline{K}[M_{2}]$
\item $\underline{K}[\alpha.M]\stepsl\alpha.\underline{K}[M]$
\item $\underline{K}[0]\stepsl0$
\end{itemize}
\end{lem}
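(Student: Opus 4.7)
I prove all three items simultaneously by induction on the structure of the continuation $K$, since they share the same inductive template. I describe the argument for the first item, $\underline{K}[M_1 + M_2] \stepsl \underline{K}[M_1] + \underline{K}[M_2]$; the scalar and zero clauses proceed identically, using the corresponding clauses of $A_l$ and $A_r$ (e.g.\ $(\alpha.M)V \to \alpha.(MV)$ and $B(0) \to 0$).

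The base case $K = k$ is immediate from $\underline{k}[M] = M$: I have $\underline{k}[M_1+M_2] = M_1+M_2 = \underline{k}[M_1] + \underline{k}[M_2]$ with no reduction steps. For the inductive case $K = \lam b B b K'$, unfolding yields $\underline{K}[M_1+M_2] = \underline{K'}[\psi(B)(M_1+M_2)]$. Since $\psi(B)$ is a base value by the definition of $\psi$, rule $A_r$ applies at the hole and produces $\psi(B)M_1 + \psi(B)M_2$ in one $\stepl$ step. I then lift this step through the outer continuation context $\underline{K'}[\cdot]$ using Lemma \ref{lem:continuation-step}, and close by applying the inductive hypothesis to the structurally smaller $K'$ to reach $\underline{K'}[\psi(B)M_1] + \underline{K'}[\psi(B)M_2] = \underline{K}[M_1] + \underline{K}[M_2]$.

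The remaining case $K = \lam{b_1} T(\lam{b_2} b_1 b_2 K')$ gives $\underline{K}[M_1+M_2] = \underline{K'}[(M_1+M_2)\sigma(T)]$ and is handled by the same three-step strategy: distribute the sum via rule $A_l$ in place of $A_r$, lift through $\underline{K'}[\cdot]$ by Lemma \ref{lem:continuation-step}, and recurse on $K'$. The delicate point I expect here is the $A_l$ step itself: the rule in $\lambdalin$ requires the right operand of the application to be a value, so some care is needed to apply it at $(M_1+M_2)\sigma(T)$. Once this is justified (by a structural argument about the form of $\sigma(T)$ arising inside continuations in the image of the inverse translation), the induction closes and all three items follow uniformly.
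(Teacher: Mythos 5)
Your induction and case split coincide with the paper's own proof of this lemma: the base case $K=k$ is a syntactic equality, the case $K=\lam bB'bK'$ uses the $A_r$ clause (licensed because $\psi(B')$ is a base value), the resulting step is lifted through $\underline{K'}[\cdot]$ by Lemma \ref{lem:continuation-step}, and the induction hypothesis on the structurally smaller $K'$ closes the case; the scalar and zero items are handled by the corresponding clauses in the same way. Up to the last case, your proposal is the paper's argument.

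Where you diverge is the case $K=\lam{b_1}T(\lam{b_2}b_1b_2K')$, and there your proof is not yet closed. You rightly flag that the required step $(M_1+M_2)\,\sigma(T)\stepl M_1\sigma(T)+M_2\sigma(T)$ is an instance of $A_l$ only when $\sigma(T)$ is a value, but the justification you defer to --- a structural argument that $\sigma(T)$ is a value for the continuations that actually occur --- is not available. Continuations of this shape arise already from forward-translated applications, where $T=\translatel N$ for an arbitrary argument $N$, and $\sigma(\translatel N)=N$ (Lemma \ref{lem:inverse-term}); taking $N=zw$ gives $\sigma(T)=zw$, which is not a value. Concretely, with $K=\lam{b_1}\translatel{zw}(\lam{b_2}b_1b_2k)$, $M_1=x$, $M_2=y$, the claim demands $(x+y)(zw)\stepsl x(zw)+y(zw)$, and no rule of $\stepl$ can produce it: $A_l$ needs a value argument, $A_r$ needs a base-value head, and the vector-space and context rules cannot distribute the sum over the application (the same obstruction hits the scalar and zero items). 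So the obligation cannot be discharged by the argument you sketch; closing it would require either restricting the statement or appealing to a left-linearity rule without the value restriction. For what it is worth, the paper's own proof of this case performs exactly this distribution step without comment, presenting it as an application of Lemma \ref{lem:continuation-step}; you have therefore isolated the one genuinely delicate point of the lemma, but as written your proof leaves it open, and the patch you promise does not exist.
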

\begin{proof}
We prove each statement by induction on $K$, using Lemma \ref{lem:continuation-step}
where necessary.\end{proof}
\begin{lem}
\label{lem:suspension-step} For any suspension $T$, if $T\stepa T'$
then $\sigma(T)\stepl\sigma(T')$.\end{lem}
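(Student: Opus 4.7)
The plan is to proceed by induction on the grammar for $T$, namely $T ::= S \mid 0 \mid \alpha.T \mid T_1+T_2$. The whole argument rests on the observation that no rule in $\stepa$ fires underneath a lambda-abstraction, so every step $T \stepa T'$ either is an $L$-rule (vector-space rule) applied at the top level of the linear-combination structure of $T$, or is a $\xi$ context step on an immediate sub-suspension. Because $\sigma$ is defined so as to commute with $0$, scalar multiplication, and sums, every such step will lift to a single $\stepl$ step on $\sigma(T)$.

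In the base cases $T = \lambda k.C$ and $T = 0$, no $\stepa$ step exists, so the claim holds vacuously. For the inductive case $T = \alpha.T_1$, I would split on whether the step is a $\xi$ reduction of $T_1$, in which case the induction hypothesis together with the scalar-multiplication $\xi$ rule in $\stepl$ yields the conclusion, or one of the top-level $L$-rules for scalar multiplication ($1.M \to M$, $0.M \to 0$, $\alpha.0 \to 0$, $\alpha.(\beta.M) \to (\alpha\beta).M$, $\alpha.(M+N) \to \alpha.M+\alpha.N$), in which case the identical rule applies to $\sigma(T)$ after a routine unfolding of the definition of $\sigma$. The case $T = T_1+T_2$ is handled analogously: $\xi$ steps inside $T_1$ or $T_2$ go through the induction hypothesis, while each of associativity, commutativity, the three factorization rules, and $0+M\to M$ lifts identically through $\sigma$.

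The only mildly subtle sub-case is a factorization rule such as $\alpha.T_1 + \beta.T_1 \to (\alpha+\beta).T_1$, whose firing requires the two summands to contain syntactically identical sub-terms; this identification is automatically preserved by $\sigma$ because $\sigma$ is a function, so both copies of $\sigma(T_1)$ remain literally the same in $\sigma(T) = \alpha.\sigma(T_1)+\beta.\sigma(T_1)$ and the rule still fires. I do not expect any real obstacle here: the whole argument essentially expresses the fact that $\sigma$ is a homomorphism for the linear-combination layer of suspensions, combined with the fact that the admissible reductions on $T$ live entirely within that layer.
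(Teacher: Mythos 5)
Your proof is correct and follows essentially the same route as the paper's: the paper inducts on the reduction derivation rather than on the structure of $T$, but the content is identical, namely that only the $L\cup\xi$ rules can apply to a suspension (no rule reduces under a $\lambda$, and the spine of $T$ has no applications) and that $\sigma$ is a homomorphism on the linear-combination layer, so each such step transfers to a single $\stepl$ step. Your remark about factorization being preserved because $\sigma$ maps identical subterms to identical images is a small extra precision the paper leaves implicit.
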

\begin{proof}
By induction on the reduction rule. Since $T$ terms do not contain
applications, the only cases possible are $L\cup\xi$, which are common
to both languages.
\end{proof}
We now have the tools to finish the proof of \ref{lem:inverse-step}.
\begin{proof}[Proof of Lemma \ref{lem:inverse-step}]
 By induction on the reduction rule, using Lemmas \ref{lem:substitution-lemma},
\ref{lem:continuation-substitution}, \ref{lem:continuation-step},
\ref{lem:continuation-linearity} and \ref{lem:suspension-step} where
necessary
\end{proof}

\section{\label{sec:cbn-to-cbv}Completeness of the call-by-name to call-by-value
simulation}

The simulation in this direction is similar to the other one, and
uses the same techniques. The adjustments we have to make are the
same as in the classical case, and deal mainly with our treatment
of variables and applications. The CPS translation, as defined in
\cite{DiazcaroPerdrixTassonValiron11}, is the following.

\begin{eqnarray*}
\translatea x & = & x\\
\translatea{\lambda x.M} & = & \lambda k.k(\lambda x.\translatea M)\\
\translatea{MN} & = & \lambda k.\translatea M(\lambda b.b\translatea Nk)\\
\translatea 0 & = & 0\\
\translatea{\alpha.M} & = & \lambda k.(\alpha.\translatea M)k\\
\translatea{M+N} & = & \lambda k.(\translatea M+\translatea N)k
\end{eqnarray*}

Again, this translation simulates the reductions of a term $M$ by
the reductions of the term $\translatea Mk$, where $k$ is free.
\begin{example}
The reductions of the term $\cpy(y+z)$ in $\lambdaalg$ are simulated
in $\lambdalin$ by the following reductions.

\begin{eqnarray*}
\translatea{\cpy(y+z)}k & = & \left(\lam k\translatea{\cpy}\left(\lam bb\translatea{y+z}k\right)\right)k\\
 & \stepbv & \translatea{\cpy}\left(\lam bb\translatea{y+z}k\right)\\
 & \stepbv & \left(\lam bb\translatea{y+z}k\right)\left(\lam x\translatea{\langle x,x\rangle}\right)\\
 & \stepbv & \left(\lam x\translatea{\langle x,x\rangle}\right)\translatea{y+z}k\\
 & \stepslb & \translatea{\langle y+z,y+z\rangle}k
\end{eqnarray*}

We see that the result is the one that corresponds to call-by-name.
It is natural to ask how we were able to perform this cloning of the
state $y+z$ in a call-by-value setting and how it can agree with
the no-cloning theorem. The answer is that the CPS encoding of the
term $y+z$ is $\translatea{y+z}=\lam k(x+y)k$, which is an abstraction.
In the quantum point of view, we can interpret this as a program,
or a specification, that \emph{prepares }the quantum state $x+y$.
Therefore this program can be duplicated.
\end{example}
The soundness of the simulations uses a similar colon translation.

\[
\begin{array}{cc}
\begin{array}{rcl}
\Phi(\lam xM) & = & \lam x\translatea M\\
\lam xM:K & = & K\Phi(\lam xM)\\
x:K & = & xK\\
0:K & = & 0\\
\alpha.M:K & = & \alpha.(M:K)\\
M+N:K & = & M:K+N:K
\end{array} & \begin{array}{ccc}
(\lam xM)N:K & = & \Phi(\lam xM)\translatea NK\\
xN:K & = & x:(\lam bb\translatea NK)\\
(MN)L:K & = & MN:\lam bb\translatea LK\\
(0)N:K & = & 0:K\\
(\alpha.M)N:K & = & \alpha.(MN):K\\
(M+N)L:K & = & ML+NL:K
\end{array}\end{array}
\]

\begin{prop}[Soundness \cite{DiazcaroPerdrixTassonValiron11}]
 For any term $M$, if $M\stepsab V$ then $\translatea Mk\stepslb V:k$.
\end{prop}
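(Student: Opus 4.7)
The argument follows the same Plotkin-style two-step strategy used for Proposition~\ref{prop:Soundness}, adapted to the call-by-name source and call-by-value target. First, I would prove by induction on $M$ that $\translatea M k \stepslb M:k$; this eliminates the administrative $\beta_v$ redexes at the head of the translation. Each inductive case follows directly from the definitions of $\translatea{\cdot}$ and the colon translation: for instance, $\translatea{MN}k = (\lam k \translatea M(\lam b b\translatea N k))k \stepbv \translatea M(\lam b b\translatea N k)$, which then unfolds further according to the shape of $M$ to match the colon translation's case split, while the algebraic cases $\alpha.M$, $M+N$ and $0$ reduce by a single administrative $\beta_v$ followed by recursive unfolding.

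Second, the core of the proof is the lemma that the colon translation preserves single reductions: for every continuation $K$ and every $M \stepab M'$, one has $M:K \stepslb M':K$. This is established by induction on the derivation of $M \stepab M'$. The crucial base case is $\beta_n$, $(\lam xM)N \to M[x:=N]$, which amounts to showing $\Phi(\lam xM)\translatea N K \stepslb M[x:=N]:K$. Since $\Phi(\lam xM) = \lam x \translatea M$ and $\translatea N$ is a $\lambdalin$ value, a top-level $\beta_v$ step (or an $A$-rule cascade when $N$ is $0$, $\alpha.N'$ or $N_1+N_2$) discharges the application, and then a substitution lemma relating $\translatea M[x:=\translatea N]$ with $\translatea{M[x:=N]}$ matches the result with the right-hand side of the colon translation. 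The linearity rules $A$ are immediate from the recursive clauses $(0)N:K = 0:K$, $(\alpha.M)N:K = \alpha.(MN):K$ and $(M+N)L:K = ML+NL:K$, and the vector-space rules $L$ are handled uniformly because the colon translation is linear in its term argument. The context rules $\xi$ follow from the induction hypothesis together with an auxiliary lemma stating that reductions inside a continuation frame can be lifted to the outer term.

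Combining the two observations, a reduction $M \stepab M_1 \stepab \dots \stepab V$ yields $\translatea M k \stepslb M:k \stepslb M_1:k \stepslb \dots \stepslb V:k$, which is the desired conclusion.

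The main obstacle is the substitution lemma for $\beta_n$: because the source is call-by-name, the substituted term $N$ may be arbitrary, yet the target reduction requires $\translatea N$ to behave like a value. This succeeds precisely because $\translatea N$ always belongs to the value class of $\lambdalin$ -- a variable, a $\lambda$-abstraction, or a linear combination built from these -- with the subtlety that $\translatea 0 = 0$ and general linear combinations are not base values, so the $\beta_v$ step must sometimes be preceded by distributing the application through $A_l$ and $A_r$. This is the same phenomenon visible in the preceding example, where $\translatea{y+z}$ is a $\lambda$-abstraction and can therefore be safely duplicated without contradicting the no-cloning theorem.
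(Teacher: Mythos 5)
Your proposal is correct and follows essentially the same route as the paper (which itself only cites the proof and sketches the strategy): first remove the administrative redexes to get $\translatea Mk\stepslb M:K$ (with the statement generalized to arbitrary continuations $K$), then show that the colon translation preserves each $\stepab$ step via the substitution lemma, and chain the two. One small inaccuracy that does not affect the structure: $\translatea N$ is never a nontrivial linear combination, since $\translatea{\alpha.N'}$ and $\translatea{N_1+N_2}$ are abstractions by construction, so the $\beta_n$ case needs only a single $\beta_v$ step in the target, the lone special case being $N=0$ where $\translatea 0=0$.
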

We will use the same procedure as in the previous section to show
that the translation is also complete.
\begin{thm}[Completeness]
\label{thm:completeness-a} If $\translatea Mk\stepslb V:k$ then
$M\stepsab V$.
\end{thm}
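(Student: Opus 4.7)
The plan is to follow the same three-stage strategy used for Theorem \ref{thm:completeness}. First I would identify a subgrammar of $\lambdalin$ capturing the image of $\translatea$ and closed under $\steplb$ reductions, partitioned into four categories analogous to computations, suspensions, continuations and CPS-values of Section \ref{sec:cbv-to-cbn}. The noticeable differences from the previous direction are that CPS-values must include bare variables, since $\translatea x = x$, and that continuations now take the form $k$ or $\lam b bTK$; arguments of applications remain base values, yielding as a byproduct the indifference property for $\translatea$.

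Next I would define an inverse translation through four mutually recursive functions $\overline{D}$, $\sigma(T)$, $\psi(B)$, $\underline{K}[\cdot]$, with $\psi(x) = x$, $\psi(\lam x S) = \lam x \sigma(S)$, and $\underline{\lam b bTK}[M] = \underline{K}[M\sigma(T)]$ as the key clauses, and linear propagation for sums, scalars and zero. Two routine structural inductions then establish $\overline{\translatea M\,k} = M$ for all terms $M$ and $\overline{V:k} = V$ for all values $V$, the analogues of Lemmas \ref{lem:inverse-term} and \ref{lem:inverse-value}.

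The substantive step is the analogue of Lemma \ref{lem:inverse-step}: if $D \steplb D'$ for a computation $D$, then $\overline{D} \stepsab \overline{D'}$. Its proof would proceed by induction on the reduction rule using the same family of support lemmas as before, namely a substitution lemma for $\psi$, $\sigma$, $\overline{\cdot}$ and $\underline{\cdot}[\cdot]$; the continuation composition identity $\underline{K_1}[\underline{K_2}[M]] = \underline{K_2[k:=K_1]}[M]$; its corollary $\underline{K}[\overline{C}] = \overline{C[k:=K]}$; the congruence fact that reductions propagate through $\underline{K}[\cdot]$; the linearity fact that $\underline{K}[\cdot]$ commutes up to $\stepsl$ with $+$, scalar multiplication and $0$; and the fact that $\sigma$ sends $\stepa$ on suspensions to $\stepl$ on their decodings.

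The main obstacle will be handling the reduction rules specific to $\lambdalin$, namely $\xi_{\lambdalin}$ and the right linearity $A_{r}$, which have no direct counterpart in $\lambdaalg$. These fire only at administrative positions built by the translation (inside continuations, or acting on arguments of the shape $\translatea N$), so the inverse translation should absorb them either into the identity or into $L$ steps via the continuation linearity lemma. Meanwhile $\beta_v$, being more restrictive than $\beta_n$, reflects directly as a $\beta_n$ step. Once each rule is dispatched, completeness follows exactly as for Theorem \ref{thm:completeness}: iterating the inverse step-lemma along the given reduction sequence yields $\overline{\translatea M\,k} \stepsab \overline{V:k}$, and the two anchoring identities collapse this to $M \stepsab V$.
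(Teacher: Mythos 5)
Your overall strategy matches the paper's: carve out the target grammar, define an inverse translation by four mutually recursive maps, prove $\overline{\translatea Mk}=M$ and $\overline{V:k}=V$, prove a step-preservation lemma for the inverse, and iterate. However, there is a genuine gap in the one place where this direction actually differs from Section \ref{sec:cbv-to-cbn}: the classification of variables. You place bare variables among the CPS-values (with $\psi(x)=x$), mirroring the first simulation; the correct move --- and the paper singles this out as \emph{the} main difference between the two directions --- is to place $x$ among the base suspensions, with the CPS-value decoding defined only on abstractions $\lam xS$ and with $\sigma(x)=x$. The reason is that in the target the $\beta_v$-redexes have the shape $(\lam xS)S_2K$ where $S_2$ is a base suspension ($x$ or $\lam kC$), so variables get replaced by \emph{suspensions}, not by CPS-values. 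With your classification the classes are not closed under $\steplb$: a variable occupying a CPS-value position such as $KB$, or the operator position of an application $B\,T\,K$, becomes after such a substitution $K(\lam kC)$ or $(\lam kC)TK$, which your grammar does not generate and your decoding cannot handle. Your grammar also misses part of the image of the translation: $\translatea x\,k=xk$ and the colon clause $xN:K=x(\lam b\,b\translatea NK)$ put variables in the position of a suspension applied to a continuation (and as the argument $S$ inside a continuation $\lam b\,bSK$), which is covered neither by $KB$ nor by $TK$ if $x\notin T$. Correspondingly, the substitution lemma must be stated with $[x:=\sigma(S)]$ (i.e., $\sigma(T)[x:=\sigma(S)]=\sigma(T[x:=S])$ and its companions), not with $[x:=\psi(B)]$; as planned, the key $\beta_v$ case of your step lemma, namely $(\lam xS)S_2K\stepbv S[x:=S_2]K$ with $S_2=\lam kC$, would not go through.

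A smaller point: the obstacle you anticipate is not one. In the target fragment arguments of applications are always base terms and the context rules never reduce under $\lambda$, so $A_r$ and $\xi_{\lambdalin}$ are simply inapplicable --- this is exactly how the paper dispatches them; nothing needs to be absorbed via the continuation linearity lemma. (Your continuation clause $\lam b\,bTK$ in place of the paper's $\lam b\,bSK$ is a harmless variation.)
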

Here is the grammar of the target language. It is closed under $\steplb$
reductions.

\[
\begin{array}{rcll}
C & ::= & KB\mid BSK\mid TK & \mbox{(base computations)}\\
D & ::= & C\mid0\mid\alpha.D\mid D_{1}+D_{2} & \mbox{(computation combinations)}\\
S & ::= & x\mid\lam kC & \mbox{(base suspensions)}\\
T & ::= & S\mid0\mid\alpha.T\mid T_{1}+T_{2} & \mbox{(suspension combinations)}\\
K & ::= & k\mid\lam bbSK & \mbox{(continuations)}\\
B & ::= & \lam xS & \mbox{(CPS-values)}
\end{array}
\]

Notice how $x$ is now considered a suspension, not a CPS-value. This
is because $x$ is replaced by a suspension after beta-reducing a
term of the form $(\lam xS)SK$. This is the main difference between
the call-by-name and call-by-value CPS simulations. Other than that,
it satisfies the same properties. In particular, we have the same
indifference property.
\begin{prop}[Indifference \cite{DiazcaroPerdrixTassonValiron11}]
 For any computations $D$ and $D'$, $D\stepab D'$ if and only
if $D\steplb D'$. In particular, if $M\stepsab V$ then $\translatea Mk\stepsab V:k$.
\end{prop}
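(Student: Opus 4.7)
The plan is to extract a single syntactic invariant from the grammar that defines the class $D$, derive the two-way coincidence of $\stepab$ and $\steplb$ on $D$ from it, and finally obtain the ``in particular'' corollary via Soundness. The invariant I will aim at is: every application appearing inside a term of class $D$ has an argument that is a base value. Once this is established, the call-by-value restrictions over call-by-name become vacuous on $D$, while the extra call-by-value rules $A_r$ and $\xi_{\lambdalin}$ cannot fire on $D$ at all.

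Concretely, applications in $D$ arise in exactly three forms: $KB$, $BSK$ (parsed as $(BS)K$), and $TK$. The arguments in these positions are drawn from $B = \lam xS$, $S = x \mid \lam kC$, and $K = k \mid \lam b bSK'$, each of which is either a variable or a $\lambda$-abstraction, hence a base value; the invariant propagates recursively into the bodies of continuations, CPS-values, and $\lam kC$ suspensions by the same inspection. The paper already observes closure of each class under $\stepab$, and a parallel case check yields closure under $\steplb$ as well. Given the invariant, the inclusion $\steplb \subseteq \stepab$ on $D$ follows because $\beta_v \subseteq \beta_n$ and $A_l \subseteq A$, because the rules $L$ and $\xi$ are shared, and because neither $A_r$ (which would demand a linear-combination argument) nor $\xi_{\lambdalin}$ (which would demand a reducible right-hand side, whereas base values are irreducible since this calculus has no context rule under $\lambda$) can ever fire on $D$. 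The converse inclusion $\stepab \subseteq \steplb$ on $D$ holds because every $\beta_n$- and every $A$-redex in $D$ already has the base-value or value argument demanded by $\beta_v$ or $A_l$. A routine induction on the reduction derivation assembles these observations into the full equivalence.

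For the ``in particular'' claim, I would apply Soundness to the reduction $M \stepsab V$, obtaining $\translatea M k \stepslb V : k$; since $\translatea M k$ is of the form $T K \in D$, closure of $D$ under $\steplb$ keeps every intermediate term in $D$, so the first part of the proposition rewrites each $\steplb$ step as an $\stepab$ step, yielding $\translatea M k \stepsab V : k$. The main work is not conceptual but purely one of bookkeeping: I must verify carefully that the base-value shape of arguments is preserved by every possible reduction, and in particular that the absence of any context rule reducing under $\lambda$ is precisely what prevents an argument in $D$ from ever becoming a non-base-value in the middle of a reduction sequence.
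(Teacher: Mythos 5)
Your proposal is correct and is essentially the paper's own argument: the paper justifies indifference precisely by the observation (stated just before the analogous proposition in Section~\ref{sec:cbv-to-cbn}) that in the grammar of translated terms every application argument is a base value, so the call-by-value restrictions are vacuously satisfied while $A_r$ and $\xi_{\lambda_{lin}}$ can never fire, and the corollary follows from Soundness plus closure of the grammar under reduction. Your write-up just makes explicit the case analysis the paper leaves as an easy check.
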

We define the inverse translation using the following four functions.

\[
\begin{array}{rclrcl}
\overline{KB} & = & \underline{K}[\phi(B)] & \sigma(x) & = & x\\
\overline{BSK} & = & \underline{K}[\phi(B)\sigma(S)] & \sigma(\lam kC) & = & \overline{C}\\
\overline{TK} & = & \underline{K}[\sigma(T)] & \sigma(0) & = & 0\\
\overline{0} & = & 0 & \sigma(\alpha.T) & = & \alpha.\sigma(T)\\
\overline{\alpha.D} & = & \alpha.\overline{D} & \sigma(T_{1}+T_{2}) & = & \sigma(T_{1})+\sigma(T_{2})\\
\overline{D_{1}+D_{2}} & = & \overline{D_{1}}+\overline{D_{2}}\\
 &  &  & \underline{k}[M] & = & M\\
\phi(\lam xS) & = & \lam x\sigma(S) & \underline{\lam bbSK}[M] & = & \underline{K}[M\sigma(S)]
\end{array}
\]

To prove the completeness of the simulation we need analogous lemmas.
Their proofs are similar, but we need to account for the changes mentioned
above.
\begin{lem}
\label{lem:inverse-term-a} For any term $M$,$\overline{\translatea Mk}=M$.\end{lem}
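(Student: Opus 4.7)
The plan is to follow the same strategy as in Lemma~\ref{lem:inverse-term}, adapted to the call-by-name to call-by-value translation. First, I observe that for every term $M$, the translation $\translatea M$ is a suspension combination $T$, so $\translatea Mk$ is a computation of the form $TK$ with $K=k$. By the definition of the inverse translation,
\[
\overline{\translatea Mk} \;=\; \underline{k}[\sigma(\translatea M)] \;=\; \sigma(\translatea M),
\]
so it suffices to establish the auxiliary equality $\sigma(\translatea M)=M$ for every term $M$. I would prove this by structural induction on $M$, with one case per clause of $\translatea{\cdot}$.

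The variable case reduces to $\sigma(\translatea x)=\sigma(x)=x$. For the zero, scalar, and sum cases, $\translatea{\cdot}$ produces a term of the shape $\lam k(\cdots)k$, so unfolding $\sigma$ and then $\overline{\cdot}$ yields $\overline{(\cdots)k}=\underline{k}[\sigma(\cdots)]=\sigma(\cdots)$; the linear action of $\sigma$ on suspension combinations combined with the induction hypothesis closes each of these cases. For the abstraction case we compute
\[
\sigma(\translatea{\lam xM}) \;=\; \sigma(\lam kk(\lam x\translatea M)) \;=\; \overline{k(\lam x\translatea M)} \;=\; \phi(\lam x\translatea M) \;=\; \lam x\sigma(\translatea M),
\]
which equals $\lam xM$ by the induction hypothesis. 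The application case is analogous: $\sigma(\translatea{MN})$ unfolds as $\overline{\translatea M(\lam bb\translatea Nk)} = \underline{\lam bb\translatea Nk}[\sigma(\translatea M)] = \sigma(\translatea M)\sigma(\translatea N) = MN$.

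The main subtlety is grammatical bookkeeping rather than mathematical depth: at each unfolding step one must check that the subterm produced belongs to the grammatical category expected by the clause of $\overline{\cdot}$, $\sigma$, $\phi$, or $\underline{\cdot}[\cdot]$ being invoked. In particular, one needs to know up front that $\translatea N$ is a valid suspension (so that $\lam bb\translatea Nk$ is a valid continuation) and that $\lam x\translatea M$ is a valid CPS-value. Both facts follow from a straightforward inspection of the clauses defining $\translatea{\cdot}$, after which the rest of the argument is routine and presents no further obstacle.
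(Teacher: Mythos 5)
Your proposal is correct and follows essentially the same route as the paper: reduce the claim to the auxiliary equality $\sigma(\translatea M)=M$ via $\overline{\translatea Mk}=\underline{k}[\sigma(\translatea M)]$, then prove it by structural induction on $M$, unfolding the clauses of $\translatea{\cdot}$, $\overline{\cdot}$, $\sigma$, $\phi$ and $\underline{\cdot}[\cdot]$ in each case exactly as the paper does. Your extra remark about checking that the unfolded subterms fall in the right grammatical categories is a reasonable bookkeeping point that the paper itself leaves implicit, so there is nothing to correct.
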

\begin{proof}
We have $\overline{\translatea Mk}=\underline{k}[\sigma(\translatea M)]=\sigma(\translatea M)$
so we have to show that $\sigma(\translatea M)=M$ for all $M$. The
proof follows by induction on the structure of $M$.\end{proof}
\begin{lem}
\label{lem:inverse-value-a} For any value $V$, $\overline{V:k}=V$.\end{lem}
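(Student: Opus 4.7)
The plan is to mimic the proof of Lemma \ref{lem:inverse-value} from the previous section, namely structural induction on $V$. Values are generated by $V ::= B \mid 0 \mid \alpha.V \mid V+W$ with $B ::= x \mid \lam xM$, so there are five cases to dispatch, and the compositional cases will close immediately while the two base cases do the real work.

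For the compositional cases, observe that the clauses $\overline{0}=0$, $\overline{\alpha.D}=\alpha.\overline{D}$ and $\overline{D_{1}+D_{2}}=\overline{D_{1}}+\overline{D_{2}}$ of the inverse translation mirror exactly the clauses $0:k=0$, $\alpha.V:k=\alpha.(V:k)$ and $(V+W):k=V:k+W:k$ of the colon translation. So in each of these cases a single unfolding followed by the induction hypothesis yields the result.

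The two base cases require unfolding both translations. If $V = x$, then $x:k = xk$ and $x$ is a base suspension $S$ in the target grammar, so $xk$ is of the form $TK$ with $T=x$ and $K=k$; hence $\overline{xk} = \underline{k}[\sigma(x)] = \sigma(x) = x$. If $V = \lam xM$, then $(\lam xM):k = k\,\Phi(\lam xM) = k(\lam x\translatea M)$, which is of the form $KB$ with $B = \lam x\translatea M$; hence $\overline{k(\lam x\translatea M)} = \underline{k}[\phi(\lam x\translatea M)] = \lam x\sigma(\translatea M)$. To finish we need $\sigma(\translatea M) = M$, which is precisely the intermediate identity established in the proof of Lemma \ref{lem:inverse-term-a} (where $\overline{\translatea Mk}$ is rewritten as $\underline{k}[\sigma(\translatea M)] = \sigma(\translatea M)$ and then shown equal to $M$).

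The only obstacle worth flagging is this last appeal: strictly speaking, $\sigma(\translatea M) = M$ is a sub-statement of Lemma \ref{lem:inverse-term-a} rather than its stated conclusion. The cleanest way to handle this is either to factor $\sigma(\translatea M) = M$ out as an explicit auxiliary lemma (to be proved by the very same induction on $M$ that Lemma \ref{lem:inverse-term-a} already uses), or to cite the intermediate step of that proof directly. With that bookkeeping in place, no further technical machinery is required; in particular, unlike Lemma \ref{lem:inverse-step}, the argument does not rely on the substitution, continuation-composition or continuation-linearity lemmas, since values translate without introducing intermediate administrative redexes.
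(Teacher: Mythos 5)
Your proof is correct and follows essentially the same route as the paper: induction on $V$, with the $0$, $\alpha.V$ and $U+V$ cases immediate from the mirrored clauses, and the base cases handled by unfolding the colon and inverse translations. The bookkeeping you flag is not actually an obstacle: since $\translatea M$ is a suspension combination, $\sigma(\translatea M)=\underline{k}[\sigma(\translatea M)]=\overline{\translatea Mk}$ by definition, so Lemma \ref{lem:inverse-term-a} as stated already yields $\sigma(\translatea M)=M$, which is exactly how the paper's proof cites it in the $\lam xM$ case.
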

\begin{proof}
By induction on the structure of $V$.\end{proof}
\begin{lem}
\label{lem:inverse-step-a} For any computation $D$, if $D\steplb D'$
then $\overline{D}\stepsab\overline{D'}$.
\end{lem}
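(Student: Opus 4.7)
The plan is to mirror the argument used for Lemma \ref{lem:inverse-step} in Section \ref{sec:cbv-to-cbn}, adapting it to the call-by-name to call-by-value direction. I would proceed by induction on the rule applied in the reduction $D\steplb D'$, but first I would establish a series of auxiliary lemmas analogous to Lemmas \ref{lem:substitution-lemma} through \ref{lem:suspension-step}, adjusted to reflect the new inverse translation.

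The first adjustment concerns the substitution lemma. Since in the call-by-value setting a variable $x$ is now treated as a base suspension rather than as a CPS-value, it is replaced during substitution by $\sigma(S)$ rather than by $\psi(B)$. Concretely, I would prove, by simultaneous induction on the structure of $B$, $T$, $C$, and $K$, the four equations $\phi(B')[x:=\sigma(S)] = \phi(B'[x:=S])$, $\sigma(T)[x:=\sigma(S)] = \sigma(T[x:=S])$, $\overline{C}[x:=\sigma(S)] = \overline{C[x:=S]}$, and $\underline{K}[M][x:=\sigma(S)] = \underline{K[x:=S]}[M[x:=\sigma(S)]]$. I would then prove the obvious analogues of continuation composition ($\underline{K_1}[\underline{K_2}[M]] = \underline{K_2[k:=K_1]}[M]$), continuation substitution ($\underline{K}[\overline{C}] = \overline{C[k:=K]}$), a continuation step lemma asserting that if $M\stepab M'$ then $\underline{K}[M]\stepab\underline{K}[M']$, continuation linearity for sums, scalars, and zero, and a suspension step lemma: if $T\stepl T'$ then $\sigma(T)\stepa\sigma(T')$, which is immediate since suspensions contain no applications so only $L\cup\xi$ rules can fire.

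With these tools in place, the inductive step on the applied reduction rule is routine. Context rules ($\xi$ and $\xi_{\lambdalin}$) are absorbed by the continuation step lemma together with the fact that $\overline{\cdot}$ commutes with sums and scalars. Vector space rules and the left/right linearity of application rules translate to the corresponding rules in $\lambdaalg$ modulo the continuation linearity lemma. The only case requiring real care is $\beta_v$: a redex of the form $BSK = (\lam xS')\,S\,K$ reduces to $S'[x:=S]\,K$, and we must check that $\overline{BSK} = \underline{K}[(\lam x\sigma(S'))\,\sigma(S)]$ reduces via a single $\beta_n$ step, followed by the substitution lemma, to $\underline{K}[\sigma(S'[x:=S])] = \overline{S'[x:=S]\,K}$, with the continuation step lemma lifting this reduction under $\underline{K}[\cdot]$.

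The main obstacle is precisely this $\beta_v$ case: unlike the $\beta_n$ case of Section \ref{sec:cbv-to-cbn}, the substituted term $S$ may itself be a full continuation-passing abstraction $\lam kC$ rather than merely a CPS-value. This is exactly why the substitution lemma must be phrased in terms of $\sigma(S)$ instead of $\psi(B)$, and why the target grammar treats $x$ as a base suspension rather than as a CPS-value. Once the substitution lemma is set up correctly along these lines, all remaining cases are direct adaptations of those in the proof of Lemma \ref{lem:inverse-step}.
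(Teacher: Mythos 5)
Your proposal is correct and follows essentially the same route as the paper: the same auxiliary lemmas (substitution with $\sigma(S)$ in place of $\psi(B)$, continuation composition/substitution, continuation step, continuation linearity, suspension step) followed by induction on the applied rule, with the $(\lam xS')\,S\,K$ redex handled by a $\beta_n$ step, the substitution lemma, and lifting under $\underline{K}[\cdot]$. The only cosmetic difference is that the paper simply observes that $\xi_{\lambda_{lin}}$ and $A_r$ can never fire on computations (arguments are always base terms), whereas you claim to absorb them into the other lemmas, which is vacuously harmless.
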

With these we can prove the completeness theorem.
\begin{proof}[Proof of Theorem \ref{thm:completeness-a}]
 By using Lemma \ref{lem:inverse-step-a} for each step of the reduction,
we get $\overline{\translatea Mk}\stepsab\overline{V:k}$. By Lemma
\ref{lem:inverse-term-a} and Lemma \ref{lem:inverse-value-a}, this
implies $M\stepsab V$.
\end{proof}
To prove Lemma \ref{lem:inverse-step-a}, we need similar intermediary
lemmas.
\begin{lem}[Substitution]
\label{lem:substitution-lemma-a} The following are true.
\begin{enumerate}
\item \textup{$\phi(B)[x:=\sigma(S)]=\phi(B[x:=S])$}
\item $\sigma(T)[x:=\sigma(S)]=\sigma(T[x:=S])$
\item \textup{$\overline{C}[x:=\sigma(S)]=\overline{C[x:=S]}$}
\item $\underline{K}[M][x:=\sigma(S)]=\underline{K[x:=S]}[M[x:=\sigma(S)]]$
\end{enumerate}
\end{lem}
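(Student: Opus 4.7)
The plan is to prove the four statements simultaneously by mutual structural induction on $B$, $T$, $C$, and $K$, following the same template as Lemma \ref{lem:substitution-lemma} in the previous section. Mutual induction is needed because the functions $\phi$, $\sigma$, $\overline{\cdot}$, and $\underline{\cdot}[\cdot]$ are defined by mutual recursion across the grammar, so the four cases feed into each other; a suitable well-founded measure is the combined size of the subject term.

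For statement (1), the only case is $B=\lam yS'$ with $y\neq x$ (by $\alpha$-conversion). Unfolding $\phi(\lam yS')=\lam y\sigma(S')$ and pushing the substitution inside, one reduces to $\sigma(S')[x:=\sigma(S)]=\sigma(S'[x:=S])$, which is exactly statement (2) applied to the smaller term $S'$. For statement (2), the case $T=x$ is handled by noting $\sigma(x)=x$ and $\sigma(x[x:=S])=\sigma(S)$, the case $T=y$ with $y\neq x$ is trivial, the case $T=\lam kC'$ reduces to statement (3) on $C'$ (using that $k$ is reserved and distinct from $x$, so $k\neq x$ and no capture issues arise), and the algebraic cases $0$, $\alpha.T'$, $T_1+T_2$ follow immediately from the IH.

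For statement (3), I would case on the form of $C$: for $C=KB$ unfold $\overline{KB}=\underline{K}[\phi(B)]$, then apply statement (4) to move the substitution past $\underline{K}[\cdot]$ and statement (1) to move it past $\phi(B)$; the cases $C=BSK$ and $C=TK$ are treated analogously, relying additionally on statement (2) for the $\sigma(S)$ and $\sigma(T)$ subterms. For statement (4), the base case $K=k$ is immediate since $k$ is reserved and $k[x:=S]=k$. The inductive case $K=\lam bbS'K'$ unfolds to $\underline{K'}[M\sigma(S')]$; applying the IH for statement (4) to $K'$ and the smaller target term $M\sigma(S')$, then distributing the substitution through the application and using statement (2) on $\sigma(S')$, gives the desired equality after refolding.

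The only real obstacle is bookkeeping around the reserved variables $k$ and $b$ and the associated freshness invariants built into the grammar: throughout the proof one must verify that the variable $x$ being substituted is distinct from every $k$ and $b$ that appears structurally, so that substitution commutes with the syntactic constructors used in $\phi$, $\sigma$, $\overline{\cdot}$, and $\underline{\cdot}[\cdot]$ without triggering capture. Once this freshness invariant is fixed at the start, each individual case is a mechanical computation of a few lines.
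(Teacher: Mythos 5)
Your proposal is correct and follows essentially the same route as the paper: a mutual structural induction on $B$, $T$, $C$, and $K$ with the four statements proved simultaneously, each case unfolding the definitions of $\phi$, $\sigma$, $\overline{\cdot}$, and $\underline{\cdot}[\cdot]$ and invoking the appropriate companion statement, with the grammar's freshness conventions on $k$ and $b$ ruling out capture. One small wording caveat: in the $K=\lam b\,bS'K'$ case the term $M\sigma(S')$ is not ``smaller'' than $M$; the application of the inductive hypothesis is legitimate because statement (4) is quantified over all $M$ and the induction is on $K$, exactly as in the paper's proof.
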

\begin{proof}
By induction on the structure of $B$, $T$, $C$ and $K$.\end{proof}
\begin{lem}
\label{lem:.continuation-composition-a} For all terms $M$ and continuations
$K_{1}$ and $K_{2}$, $\underline{K_{1}}[\underline{K_{2}}[M]]=\underline{K_{2}[k:=K_{1}]}[M]$.\end{lem}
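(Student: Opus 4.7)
The plan is to mimic the proof of Lemma \ref{lem:.continuation-composition} from the previous section, adapting it to the simpler grammar of continuations in the call-by-name to call-by-value setting, where $K ::= k \mid \lam b bSK$. Since there are only two productions, I would proceed by induction on the structure of $K_2$.

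For the base case $K_2 = k$, I would unfold both sides directly: the left-hand side becomes $\underline{K_1}[\underline{k}[M]] = \underline{K_1}[M]$, while the right-hand side becomes $\underline{k[k:=K_1]}[M] = \underline{K_1}[M]$, so the two agree. For the inductive case $K_2 = \lam b bSK$, I would unfold the left-hand side as
\[
\underline{K_1}[\underline{\lam b bSK}[M]] \;=\; \underline{K_1}[\underline{K}[M\sigma(S)]] \;=\; \underline{K[k:=K_1]}[M\sigma(S)],
\]
using the induction hypothesis on $K$ in the last step. On the right-hand side, I would compute
\[
\underline{(\lam b bSK)[k:=K_1]}[M] \;=\; \underline{\lam b b\,S[k:=K_1]\,K[k:=K_1]}[M] \;=\; \underline{K[k:=K_1]}[M\,\sigma(S[k:=K_1])].
\]
It then suffices to observe that $\sigma(S[k:=K_1]) = \sigma(S)$, and hence that the two sides coincide.

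The one delicate point is this last observation, which is where the reserved-variable conventions described after the grammar are essential. According to the grammar $S ::= x \mid \lam kC$, every occurrence of the variable $k$ inside a suspension is bound by an outer $\lam k$, so $k$ never appears free in $S$. Therefore $S[k:=K_1] = S$, and applying $\sigma$ to both sides gives the required equality. I expect this bookkeeping about the reserved variable $k$ to be the main (though mild) obstacle; the induction itself is essentially a one-line computation once the definitions of $\underline{\cdot}[\cdot]$ and of substitution into a continuation are unfolded, exactly as in Lemma \ref{lem:.continuation-composition}.
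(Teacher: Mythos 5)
Your proof is correct and follows essentially the same route as the paper's: induction on the structure of $K_{2}$, with the base case $k$ and the inductive case $\lam bbSK$ handled by unfolding the definition of $\underline{\cdot}[\cdot]$ and applying the induction hypothesis to $K$. Your explicit remark that $S[k:=K_{1}]=S$ (because the reserved variable $k$ only occurs bound inside suspensions) is a detail the paper leaves implicit, but it is the right justification for the final step.
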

\begin{proof}
By induction on the structure of $K_{2}$.\end{proof}
\begin{lem}
\textup{\label{lem:continuation-substitution-a} For all $K$ and
$C$, $\underline{K}[\overline{C}]=\overline{C[k:=K]}$}.\end{lem}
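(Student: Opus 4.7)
The plan is to proceed by structural induction on the base computation $C$. The grammar for $C$ gives three cases: $C = K'B$, $C = BSK'$, and $C = TK'$. In each case, applying the definition of $\overline{\cdot}$ yields an expression of the shape $\underline{K'}[M]$, where $M$ is $\phi(B)$, $\phi(B)\sigma(S)$, or $\sigma(T)$ respectively. Hence $\underline{K}[\overline{C}] = \underline{K}[\underline{K'}[M]]$, and Lemma \ref{lem:.continuation-composition-a} collapses this nested continuation context into $\underline{K'[k:=K]}[M]$.

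For the other side, I would compute $C[k:=K]$ directly. The crucial observation is that by the freshness conventions built into the grammar, the variable $k$ never appears free in a CPS-value $B$ nor in a (combination of) suspensions $S$ or $T$: it occurs only as a bound variable inside subterms of the form $\lam k C'$, or as the placeholder slot inside a continuation $K'$. Consequently $B[k:=K] = B$, $S[k:=K] = S$, and $T[k:=K] = T$, so the substitution only affects the outer continuation $K'$. Applying $\overline{\cdot}$ to $C[k:=K]$ then yields exactly $\underline{K'[k:=K]}[M]$, matching the previous computation. The degenerate subcase $K' = k$ is handled uniformly, since $\underline{k}[M] = M$ and $k[k:=K] = K$.

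No serious obstacle is expected. The argument is a direct case analysis and Lemma \ref{lem:.continuation-composition-a} does the real computational work; the only point demanding care is the freshness discussion above, and this has already been settled once and for all in the setup of the grammar. The overall structure of the argument is identical to that of Lemma \ref{lem:continuation-substitution}; the only substantive change is that the case for $B_1 B_2 K$ is replaced by the case for $B S K'$, with $\sigma(S)$ playing the role previously played by $\psi(B_2)$.
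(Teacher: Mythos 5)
Your proposal is correct and matches the paper's own argument: a case analysis on the three productions of $C$, unfolding $\overline{C}$ to $\underline{K'}[M]$, collapsing $\underline{K}[\underline{K'}[M]]$ via Lemma \ref{lem:.continuation-composition-a}, and noting that the substitution $[k:=K]$ only touches the continuation part since $k$ cannot occur free in $B$, $S$, or $T$. Your explicit remark on this freshness point and on the subcase $K'=k$ is implicit in the paper's proof but entirely consistent with it.
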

\begin{proof}
By induction on the structure of $C$, using Lemma \ref{lem:.continuation-composition-a}
where necessary.\end{proof}
\begin{lem}
\label{lem:continuation-step-a}For any continuation $K$ and term
$M$, if $M\stepab M'$ then $\underline{K}[M]\stepab\underline{K}[M']$.\end{lem}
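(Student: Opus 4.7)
The plan is to proceed by structural induction on the continuation $K$, exactly in the spirit of the proof of Lemma \ref{lem:continuation-step} from the previous section. The task is in fact \emph{simpler} here, since in the call-by-name-to-call-by-value direction the grammar of continuations has only two clauses, $K ::= k \mid \lam b\, bSK'$, whereas before there were three.

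For the base case $K = k$, the definition gives $\underline{k}[M] = M$ and $\underline{k}[M'] = M'$, so the assumption $M \stepab M'$ transfers immediately to $\underline{K}[M] \stepab \underline{K}[M']$.

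For the inductive case $K = \lam b\, bSK'$, I would unfold the definition of the inverse translation to obtain
\[
\underline{K}[M] \;=\; \underline{K'}[M\,\sigma(S)]
\quad\text{and}\quad
\underline{K}[M'] \;=\; \underline{K'}[M'\,\sigma(S)].
\]
From the hypothesis $M \stepab M'$, the context rule $\xi$ for the left-hand side of an application (which belongs to $\stepab$ since $\xi$ is one of the common rules listed in Figure \ref{fig:rewrite-rules}) yields $M\,\sigma(S) \stepab M'\,\sigma(S)$. The induction hypothesis, applied to the strictly smaller continuation $K'$ with the term $M\,\sigma(S)$, then gives $\underline{K'}[M\,\sigma(S)] \stepab \underline{K'}[M'\,\sigma(S)]$, which is exactly what is needed.

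I do not expect a genuine obstacle in this lemma: the only thing to be careful about is that $\sigma(S)$ is a well-defined term (given by the cases of $\sigma$ on suspensions) and that the freshness convention on the variable $b$ prevents any capture issues when placing $M$ into the hole of $\underline{K}[\cdot]$. Both are ensured by the variable-naming discipline imposed on the grammar of the target language, so the induction goes through without any additional machinery.
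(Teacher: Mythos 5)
Your proof is correct and follows essentially the same route as the paper's: structural induction on $K$, with the base case $K=k$ immediate and the case $K=\lam b bSK'$ handled by first applying the left-application context rule $\xi$ (which is indeed part of $\stepab$) to get $M\sigma(S)\stepab M'\sigma(S)$ and then invoking the induction hypothesis on $K'$. No gaps.
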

\begin{proof}
By induction on the structure of $K$.\end{proof}
\begin{lem}
\label{lem:continuation-linearity-a}The following are true.
\begin{itemize}
\item $\underline{K}[M_{1}+M_{2}]\stepsa\underline{K}[M_{1}]+\underline{K}[M_{2}]$
\item $\underline{K}[\alpha.M]\stepsa\alpha.\underline{K}[M]$
\item $\underline{K}[0]\stepsa0$
\end{itemize}
\end{lem}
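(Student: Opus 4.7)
The plan is to prove each of the three reductions by induction on the structure of the continuation $K$, whose grammar in this section is $K ::= k \mid \lam b bSK'$.

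For the base case $K=k$, we have $\underline{k}[M]=M$ by definition, so all three statements hold as equalities (zero reduction steps). For the inductive case $K=\lam b bSK'$, the definition gives $\underline{\lam b bSK'}[M]=\underline{K'}[M\sigma(S)]$. For the first statement, this yields
\[
\underline{K}[M_1+M_2] \;=\; \underline{K'}[(M_1+M_2)\sigma(S)].
\]
The redex $(M_1+M_2)\sigma(S)$ reduces by the application-linearity rule from $A$, which is included in $\stepa$ with no restriction on the argument (this is precisely the asymmetry with $\lambdalin$ that makes the call-by-name side easier here), giving $(M_1+M_2)\sigma(S)\stepa M_1\sigma(S)+M_2\sigma(S)$. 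Lemma \ref{lem:continuation-step-a} lets me carry this step through the surrounding context $\underline{K'}[\cdot]$, and then the induction hypothesis applied to $K'$ finishes:
\[
\underline{K'}[M_1\sigma(S)+M_2\sigma(S)] \;\stepsa\; \underline{K'}[M_1\sigma(S)]+\underline{K'}[M_2\sigma(S)] \;=\; \underline{K}[M_1]+\underline{K}[M_2].
\]

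The second and third statements go exactly the same way, using the other two clauses of the rule $A$, namely $(\alpha.M)N\stepa \alpha.(MN)$ and $(0)N\stepa 0$, followed by Lemma \ref{lem:continuation-step-a} and the inductive hypothesis on $K'$. In the zero case, the final reduction $\underline{K'}[0]\stepsa 0$ supplied by the induction hypothesis closes the argument.

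I do not expect any real obstacle here: the proof is structural and routine once Lemma \ref{lem:continuation-step-a} is in hand. The only thing to double-check is that the linearity rewrites of application used in the inductive step are indeed available in $\stepa$ without a value restriction on the right-hand argument $\sigma(S)$; this is exactly what distinguishes the $\lambdaalg$ setting (rule $A$) from the $\lambdalin$ setting (rules $A_l,A_r$), and it is why writing down the $\lambdaalg$ version of the lemma is slightly more uniform than its $\lambdalin$ counterpart.
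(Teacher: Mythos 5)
Your proof is correct and follows essentially the same route as the paper: induction on the structure of $K$, with the base case $k$ holding as an equality and the inductive case $\lam b bSK'$ handled by the unrestricted linearity-of-application rule $A$ of $\lambdaalg$, pushed through the context via Lemma \ref{lem:continuation-step-a} and closed by the induction hypothesis. Your side remark about the absence of a value restriction in $\stepa$ is exactly the point that makes this direction go through uniformly.
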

\begin{proof}
We prove each statement by induction on $K$, using Lemma \ref{lem:continuation-step-a}
where necessary.\end{proof}
\begin{lem}
\label{lem:suspension-step-a} For any suspension $T$, if $T\stepl T'$
then $\sigma(T)\stepa\sigma(T')$.\end{lem}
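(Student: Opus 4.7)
The plan is to mirror the proof of the analogous Lemma~\ref{lem:suspension-step} and induct on the rule used in $T \stepl T'$. The key structural observation is that the grammar $T ::= S \mid 0 \mid \alpha.T \mid T_1 + T_2$ with $S ::= x \mid \lam k C$ permits only sums, scalars, zero, variables, and $\lambda$-abstractions at the root; in particular, no application ever appears there. Since the context rules $\xi$ displayed in Figure~\ref{fig:rewrite-rules} do not propagate reductions under $\lambda$-binders, no reduction can penetrate a base suspension $\lam k C$, and the application-specific rules $A_l$, $A_r$, $\beta_v$ and $\xi_{\lambdalin}$ are vacuously excluded. Hence any reduction $T \stepl T'$ must come from $L \cup \xi$, which is precisely the set shared between $\lambdalin$ and $\lambdaalg$.

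The induction then proceeds case by case. For each vector-space rule (Asso, Com, F, S) fired at the root of $T$, the redex has one of the shapes $\alpha.(\beta.T_1)$, $\alpha.(T_1+T_2)$, $T_1+T_2$, $\alpha.T_1 + \beta.T_1$, $1.T_1$, $0.T_1$, $\alpha.0$, or $0+T_1$. In every one of these cases the defining clauses of $\sigma$ on $\alpha.(-)$, $(-)+(-)$ and $0$ are strictly compositional, so the $\sigma$-image of the redex is itself an instance of the same $L$-rule in $\lambdaalg$, yielding one step of $\stepa$. For the context cases $\alpha.T_1 \stepl \alpha.T_1'$ and $T_1 + T_2 \stepl T_1' + T_2$ (with its symmetric counterpart), the sub-reduction again takes a suspension to a suspension by closure of the grammar, so the inductive hypothesis gives $\sigma(T_1) \stepa \sigma(T_1')$, and the corresponding $\xi$-rule of $\stepa$ lifts this to the desired step. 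The base cases $T = x$, $T = 0$, $T = \lam k C$ are vacuous, as none of them admits a reduction.

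The one point worth checking carefully is that the absence of a context rule under $\lambda$ really does shield the computation $C$ hidden inside any base suspension $\lam k C$ from being touched by $\stepl$; otherwise, handling that case would require invoking Lemma~\ref{lem:inverse-step-a}, which is precisely what Lemma~\ref{lem:suspension-step-a} is in service of, and a circularity would arise. Since the rules of Figure~\ref{fig:rewrite-rules} confirm this absence, the argument remains non-circular, and beyond that the proof is pure bookkeeping with no genuine obstacle.
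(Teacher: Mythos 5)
Your proof is correct and follows essentially the same route as the paper: induction on the rule used in $T\stepl T'$, observing that only the common rules $L\cup\xi$ can fire, and concluding via the compositionality of $\sigma$ on $0$, $\alpha.(-)$ and $+$ together with the induction hypothesis for the context cases. Your explicit check that no $\xi$-rule propagates under a $\lambda$-binder (so the body $C$ of a base suspension $\lam kC$ is shielded, avoiding any circular appeal to Lemma~\ref{lem:inverse-step-a}) is in fact a slightly more careful justification than the paper's terse remark that suspensions do not contain applications.
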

\begin{proof}
By induction on the reduction rule. Since $T$ terms do not contain
applications, the only cases possible are $L\cup\xi$, which are common
to both languages.
\end{proof}
We can now prove Lemma \ref{lem:inverse-step-a}.
\begin{proof}[Proof of Lemma \ref{lem:inverse-step-a}]

By induction on the reduction rule, using Lemmas \ref{lem:substitution-lemma-a},
\ref{lem:continuation-substitution-a}, \ref{lem:continuation-step-a},
\ref{lem:continuation-linearity-a} and \ref{lem:suspension-step-a}
where necessary. Notice that the rules $\xi_{\lambda_{lin}}$ and
$A_{r}$ are not applicable since arguments in the target language
are always base terms.
\end{proof}

\section{\label{sec:conclusion}Discussion and conclusion}

We showed the completeness of two CPS translations simulating algebraic
lambda calculi introduced in \cite{DiazcaroPerdrixTassonValiron11}.
We did so by using techniques inspired from \cite{SabryWadlerTPLS96}
to define an inverse translation and showing that it preserves reductions.

Our treatment differs from Sabry and Wadler's \cite{SabryWadlerTPLS96},
not only because they work in a non-algebraic setting, but also because
they decompile continuations into abstractions. For example, they
defined $\underline{\lam bBbk}[M]$ as $\mathsf{let}\ b=M\ \mathsf{in}\ \mbox{\ensuremath{\phi}(B)}b$.
This required the modification of the source language and led to the
consideration of the \emph{computational lambda calculus} \cite{MoggiLICS89}
as a source language instead. We avoid this by directly substituting
and eliminating variables introduced by the forward translation, which
allows us to obtain an exact inverse.

However, the translations defined in \cite{SabryWadlerTPLS96} satisfy
an additional property: they form a Galois connection. Our translations
fail to satisfy one of the four required criteria to be a Galois connection,
namely that $\overline{N}:k$ reduces to $N$. It would be interesting
to see if we can accomplish the same thing in the algebraic case,
all while dealing with the problems mentioned above.

Originally, the work in \cite{DiazcaroPerdrixTassonValiron11} also
considers another version of $\lambdalin$ and $\lambdaalg$ with
\emph{algebraic equalities} instead of \emph{algebraic reductions}.
For example, we could go back and forth between $M+N-N$ and $M$,
which is not permitted by the rules we presented above.\emph{ }Algebraic
equalities can be formulated as the symmetric closure of the algebraic
reductions $\stepa$ and $\stepl$. The resulting four systems $\lambdalinr$,
$\lambdaalgr$, $\lambdaline$, and $\lambdaalge$ have all been shown
to simulate each other. The results of this paper can be extended
to these systems as well.

\paragraph{Acknowledgments}

Many thanks to Alejandro Díaz-Caro, Benoît Valiron, Pablo Arrighi,
and Christophe Calvès for fruitful discussions and suggestions. This
work is supported by the CNRS - INS2I PEPS project QuAND.

\bibliographystyle{eptcs}
\bibliography{biblio}

\end{document}